\newtheorem{theorem}{Theorem}
\newcommand{\pushright}[1]{\ifmeasuring@#1\else\omit\hfill$\displaystyle#1$\fi\ignorespaces}
\algnewcommand\algorithmicsave{\quad \textbf{save}}
\algnewcommand\Save{\item[\algorithmicsave]}
\DeclareMathOperator*{\argmax}{arg\,max}
\DeclareMathOperator*{\argmin}{arg\,min}
\begin{document}

\def\spacingset#1{\renewcommand{\baselinestretch}%
{#1}\small\normalsize} \spacingset{1}

%%%%%%%%%%%%%%%%%%%%%%%%%%%%%%%%%%%%%%%%%%%%%%%%%%%%%%%%%%%%%%%%%%%%%%%%%%%%%%

 \vspace{-1cm}
 
  \title{\bf Fitting latent non-Gaussian models using variational Bayes and Laplace approximations}
  \author{Rafael Cabral\footnote{Correspondence: rafael.medeiroscabral@kaust.edu.sa} , David Bolin, H\aa vard Rue \hspace{.2cm}\\
    King Abdullah University of Science and Technology, Thuwal, Saudi Arabia
    }
  \date{}
  \maketitle

\begin{abstract}

Latent Gaussian models (LGMs) are perhaps the most commonly used class of models in statistical applications. Nevertheless, in areas ranging from longitudinal studies in biostatistics
to geostatistics, it is easy to find datasets that contain inherently non-Gaussian features, such as sudden jumps or spikes, that adversely affect the inferences and predictions made from an LGM. These datasets require more general latent non-Gaussian models (LnGMs) that can handle these non-Gaussian features automatically. However, fast implementation and easy-to-use software are lacking, which prevent LnGMs from becoming widely applicable. In this paper, we derive variational Bayes algorithms for fast and scalable inference of LnGMs. The approximation leads to an LGM that downweights extreme events in the latent process, reducing their impact and leading to more robust inferences. It can be applied to a wide range of models, such as autoregressive processes for time series, simultaneous autoregressive models for areal data, and spatial Matérn models. To facilitate Bayesian inference, we introduce the \verb|ngvb| package, where LGMs implemented in R-INLA can be easily extended to LnGMs by adding a single line of code.    

%Furthermore, if the previous non-Gaussian features are present in the data, one often obtains more accurate predictions with lower standard deviation, as demonstrated by the applications of this paper.

%latent non-Gaussian models (LnGMs) 

%This class models are more robust, in the sense that it downweights extreme events in the latent field, such as sudden jumps or spikes, reducing their impact on the inferences. 

%and the method also permits identifying where those deviations occur.
%The illustrated in applications

%The computationally expensive step of this algorithm involves finding the posterior distribution of a latent Gaussian model (LGM), and for this step, we use Laplace approximations, which have proven to be efficient and accurate for LGMs.  
%Sudden jumps or spikes in the data one obtains more accurate predictions with lower standard deviation.
%The illustrated in applications

%LnGMs can be seen as a robust extension of LGMs and the algorithm starts by fitting an LGM and then detecting the time points or locations where the LGM lacks flexibility due to the presence of outliers in the data (outlier identification). Then it refits a new LGM down-weighting the problematic locations (robustification). This iterative process of outlier identification and robustification is repeated until convergence is met. 

%
\end{abstract}

\noindent%
{\it Keywords:}  Variational inference, heavy-tailed, normal-inverse Gaussian, hierarchical models, Markov random fields.

\section{Introduction}  \label{sect:introduction}

Latent models are at the heart of modern statistical modeling and are needed whenever the process of interest is observable only through indirect observations. There exist good statistical methods and a well-established theory for latent Gaussian models (LGMs), where the latent process is assumed to be Gaussian. This class contains generalized linear mixed models \citep{fong2010bayesian}, spatial and spatio-temporal models \citep{bakka2018spatial} and survival models \citep{martino2011approximate}, amongst many other applications. MCMC (Markov Chain Monte Carlo) provides a general recipe to generate samples from LGM posteriors. Still, in many circumstances, especially in high-dimensional problems, the \emph{integrated nested Laplace approximation} (INLA) methodology is faster and more accurate at performing inference within a reasonable amount of time \citep{rue2017bayesian}. As a result, the related R-package (R-INLA; see \url{www.r-inla.org}) has gained much attention and is now a helpful tool for quick and accurate Bayesian inference.

%\citep{bakka2018spatial, blangiardo2013spatial}

An LGM is a 3-stage hierarchical model, which includes: first, $\pi(\mathbf{y}|\mathbf{x},\boldsymbol{\theta}_1)$ which is a model for the response, where the observations are often conditionally independent given the latent field $\mathbf{x}$ and a set of hyperparameters $\boldsymbol{\theta}_1$; a latent Gaussian field $\mathbf{x}$ with a given mean $\mathbf{m}$ and precision matrix $\mathbf{Q}$ defined conditionally on the second set of hyperparameters $\boldsymbol{\theta}_2$; and finally, a distribution for the priors $\pi(\boldsymbol{\theta})=\pi(\boldsymbol{\theta}_1,\boldsymbol{\theta}_2)$. We will make use of the abbreviation $\mathrm{LGM} \{ \pi(\mathbf{y}|\mathbf{x}, \boldsymbol{\theta}_1), \mathbf{m}(\boldsymbol{\theta}_2), \mathbf{Q}(\boldsymbol{\theta}_2), \pi(\boldsymbol{\theta}) \}$  to refer to these models and $\mathrm{pLGM}\{...\}$ to the ensuing posterior distribution $\pi(\mathbf{x},\boldsymbol{\theta}|\mathbf{y})$.

LGMs allow for non-Gaussian responses, but the latent layer, which often contains Gaussian processes to model spatial and temporal dependence, must be normally distributed. Despite the Gaussian processes' flexible nature, they can over-smooth in the presence of local spikes and sudden jumps in the data. For these cases, a non-Gaussian model often leads to improved predictive power \citep{paciorek2003nonstationary,bolin2014spatial,walder2020bayesian,wallin2015geostatistical}. We refer to these events (local spikes and sudden jumps) as process outliers since they occur with very low probability in Gaussian processes.

 %Moreover, \cite{huber19721972} challenged the dogma that errors should be normally distributed, and he further added that ``one might argue from sad experience that the model should also allow for a few gross elementary errors occurring with low probability".

%add this later
%

%These ``gross elementary errors" are often due to outliers in the data, which in our context are observations that are very unlikely to be generated from an LGM.

One way of accounting for outliers and obtaining a more robust analysis is to consider leptokurtic distributions \citep{huber2009}. \cite{west1984outlier} examined outliers in linear regression models using heavy-tailed error distributions and showed that these distributions provide an automatic means of both detecting and accommodating possibly aberrant observations. More recently, \cite{bolin2014spatial} and \cite{wallin2015geostatistical} presented a class of non-Gaussian continuous processes with the same mean and covariance structure as Gaussian processes while allowing for asymmetry and longer-tailed marginal distributions. The main idea consisted in replacing the excitation noise of these processes, which is traditionally Gaussian, with the generalized hyperbolic (GH) distribution \citep{barndorff1978hyperbolic}, which contains the normal, $t$-Student, normal-inverse Gaussian (NIG), and other common distributions as special cases. The same procedure can also be applied to processes defined in discrete space, and \cite{cabral2022controlling} presented a general approach to extend Gaussian processes to non-Gaussianity, which we will review in section \ref{sect:ngproc}.

%This idea goes back to \cite{newcomb1886generalized}, who suggested that a less rapidly increasing function should replace the square exponent of the normal density.

When we replace the latent Gaussian layer of LGMs with these more robust longer-tailed models, we obtain a class of models that we will call latent non-Gaussian models (LnGMs), and for such models INLA is not applicable. \cite{walder2020bayesian} implemented a Gibbs sampler for some conjugate LnGMs considering Laplace driving noise. A more general implementation in Stan \citepalias{stan}, which considers NIG driving noise, is provided in \cite{cabral2022controlling}. Still, as with LGMs, both of these MCMC implementations are slow when the latent field is high-dimensional.  In this paper, we derive a variational Inference (VI) algorithm for a fast and scalable estimation of LnGMs. Variational inference \citep{bishop2006pattern} is an optimization-based technique for approximate Bayesian inference and provides a computationally efficient alternative to sampling methods (see \cite{zhang2018advances} for recent advances). In our context, the approach consists of fitting an LGM, detecting the outliers in the latent process, then fitting another LGM that downweights the outliers, detecting the outliers again, and repeating this process until convergence is met. Our algorithm often converges in about 3 to 10 iterations, and if INLA \citep{rue2007approximate} is used to fit the LGMs, it is considerably faster than the MCMC alternatives. The implementation of LnGM models is made practical from a user perspective through the package \verb|ngvb|, which we present in section \ref{sect:implementation}.

The distributions we obtain are approximations of the true posterior distributions. However, \cite{wang2019variational} found that the model misspecification error dominates the variational approximation error in the infinite data limit for models where the dimension of the latent variables does not grow with the data. The results suggest that, when it comes to predictive performance, the VI approximation of a well-specified model should be preferred to the exact results from an ill-specified model. Moreover, the example and simulations of sections \ref{sect:illustration} and \ref{sect:simu} demonstrate that, even with few observations, we can approximate the posterior distribution of the latent field $\mathbf{x}$ reasonably well.% which is of main interest

 \subsection{Structure of the paper} \label{sect:simplified}

The structure of the paper is as follows. In section \ref{sect:preliminaries} we introduce preliminary concepts necessary to derive this paper's main result. Section \ref{sect:VIresult} contains two theorems that provide the basis for our algorithms, which are shown in section \ref{sect:implementation} to approximate LnGMs' posterior distributions. In section \ref{sect:implementation} we explicate the method with a simple example, and section \ref{sect:simu} contains a simulation study that examines the quality of the approximations. In section \ref{sect:applications} we consider two applications: we fit growth curves using a model with random slopes and intercepts, and we fit areal data with a simultaneous autoregressive model. Finally, \mbox{section \ref{sect:discussion}} contains a discussion and plans for future work.

%To show these models' broad range of applications, i

To simplify the exposition of the results, in sections \ref{sect:preliminaries} and \ref{sect:VIresult}, we restrict the latent layer to have mean $\mathbf{0}$, which will be comprised of one random effect only, for instance, an autoregressive process of order 1 (AR1) to account for temporal dependence of the observations. Also, we will only consider the symmetric NIG distribution as a longer-tailed alternative to the Gaussian distribution. However, the results easily extend to more realistic models with several latent effects and driven by other longer-tailed distributions such as the $t$-Student distribution. These extensions are discussed in appendix \ref{sect:extensions}.

%To facilitate the exposition of the results the theorems were derived for LnGMs with just one random effect 
%Next, section \ref{sect:}
%we assime different components are independent

%However, in several spatial and temporal applications the data may contain sudden jumps or sharps spikes that are overmoothed

\section{Preliminaries} \label{sect:preliminaries}

In this section, we review some important concepts that will be needed in later sections. Specifically, in section \ref{sect:ngproc}, we summarize a generic class of non-Gaussian models. This allows us to give a precise definition of the class of LnGMs that we are considering in section \ref{sect:LnGM}. Finally, in section \ref{sect:cavi} we introduce the well-known coordinate ascent variational inference (CAVI) algorithm.

%general method for extending Gaussian processes to non-Gaussianity. 

\subsection{Extending Gaussian models to non-Gaussianity} \label{sect:ngproc}

To explain the procedure, let us first consider an autoregressive process of order 1 (AR1). The model is defined by the set of equations $\{x_{i}-\rho x_{i-1} = Z_i\}_{i=2,\dotsc,N}$, where the driving noise $Z_i$ usually follows a Gaussian distribution, and $\rho$ is the autocorrelation parameter ($|\rho|<1$). One way to extend this model is to consider a normal inverse-Gaussian (NIG) distribution for the driving noise, which is semi-heavy-tailed, and contains the Gaussian distribution as a special case. This extension preserves the mean and covariance structure of the process while at the same time allowing for more flexible sample paths which exhibit sudden jumps (see Fig. \ref{fig:sim1}) and leptokurtic marginal distributions. Several models can be extended in a similar manner, where linear combinations of elements of $\mathbf{x}$ which are assumed to follow Gaussian noise $[\mathbf{D}\mathbf{x}]_i=Z_i$ will now be driven by NIG noise $\Lambda_i$. In the AR1 example, the dependency matrix that specifies the process is defined by $[\mathbf{D}\mathbf{x}]_i= x_{i}-\rho x_{i-1}$, and it is shown in \mbox{appendix \ref{app:ng}}, along with the dependency matrices for several other models. 

%not present in the Gaussian model

\begin{figure}[htp]
\centering
\includegraphics[width=0.49\linewidth]{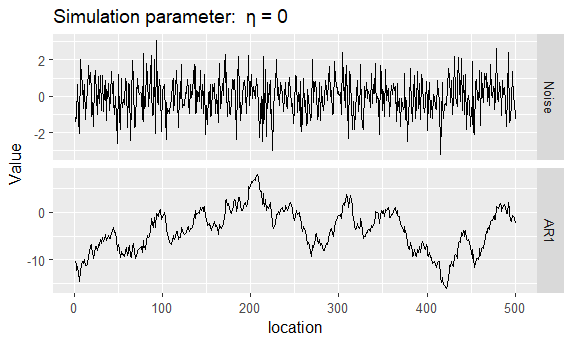}
\includegraphics[width=0.49\linewidth]{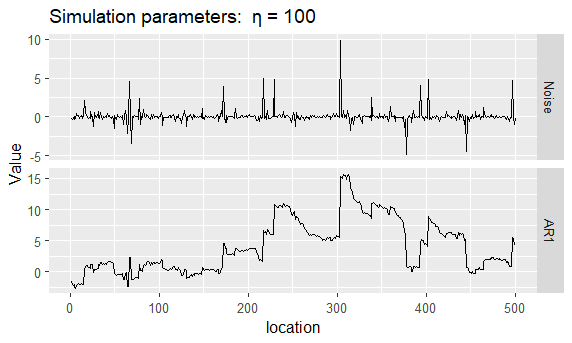}
\caption{Sample of the driving noise (top) and paths (bottom) of an AR1 process. The plots on the left are Gaussian processes, and on the right are NIG-driven processes.}
\label{fig:sim1}
\end{figure}

More generally, let $\mathbf{x}^G$ follow a multivariate Gaussian distribution with dimension $N$, mean $\boldsymbol{0}$, and precision matrix $\mathbf{Q}=  \mathbf{\mathbf{D}}^\top \mathbf{D}$, for a prespecified matrix $\mathbf{D}$. It can be expressed through \begin{equation}\label{eq:gaussian}
\mathbf{D}\mathbf{x}^G\overset{d}{=} \mathbf{Z},
\end{equation}
where $\mathbf{Z} = [Z_1, \dotsc, Z_N]^\top$ is a vector of i.i.d. standard Gaussian variables. The non-Gaussian extension for $\mathbf{x}^G$ consists in replacing the driving noise distribution:
\begin{equation}\label{eq:frame}
\mathbf{D}\mathbf{x}\overset{d}{=} \mathbf{\Lambda},
\end{equation}
where $\boldsymbol{\Lambda} = [\Lambda_1, \dotsc, \Lambda_N]^\top$ is a vector of independent and standardized generalized hyperbolic (GH) random variables that depend on the parameter $\eta$, which controls the non-Gaussianity. For now, we restrict $\Lambda_i$ to be a symmetric NIG distribution, and in appendix \ref{sect:extensions} we consider other member distributions of the GH family. 

\cite{cabral2022controlling} presented these models as a flexible extension of Gaussian models since they contain the Gaussian model as a special case (when $\eta=0$) and deviations from the Gaussian model are quantified by the parameter $\eta$. As $\eta$ increases, the kurtosis of the noises $\Lambda_i$ and of the marginals of $\mathbf{x}$ increase, and in the limiting case $\eta \to \infty$, the NIG distribution converges to the Cauchy distribution. The NIG distribution has a variance-mean mixture representation $\Lambda_i|V_i \sim \mathrm{N}(0, V_i)$ where the mixing variables $V_i$ follows independently an inverse-Gaussian distribution $\mathrm{IG}(1,\eta^{-1})$. Considering the mixing vector $\mathbf{V}=[V_1,\dotsc,V_N]^\top$, the mixture representation for $\mathbf{x}$ is then
\begin{equation} \label{eq:ngaussian}
\mathbf{x}|\mathbf{V} \sim \mathrm{N}\left(\mathbf{0}, \ \  \mathbf{D}^{-1}\mathrm{diag}(\boldsymbol{V})\mathbf{D}^{-T}\right), \ \ \  V_{i}|\eta \overset{ind.}{\sim} \mathrm{IG}(h_i,\eta^{-1} h_i^2),    
\end{equation}
where $h_i$ are predefined constants \citep{cabral2022controlling}, which are equal to 1 for models defined in discrete space.

We list the main models where this extension is possible: i.i.d.~random effects,  random walk (RW) and autoregressive (AR) processes  \citep{ghasami2020autoregressive} for time series; simultaneous autoregressive \citep{walder2020bayesian} and conditional autoregressive processes (CAR) for graphical models and areal data; and Matérn processes \citep{bolin2014spatial, wallin2015geostatistical} which can be used in a variety of applications, such as in geostatistics and spatial point processes. When the smoothness parameter of Matérn models is $\alpha=2$, the sample paths will exhibit spikes, as shown in Fig.~\ref{fig:sim3}. See also \url{rafaelcabral96.github.io/nigstan/} for Bayesian applications of these models.

%Finally, we remark that these models handle outliers fully automatically without needing ad-hoc ``outlier rejection rules" and therefore are inherently robust models \citep{berger1994overview}.

\begin{figure}[htp]
\centering
\includegraphics[width=\linewidth]{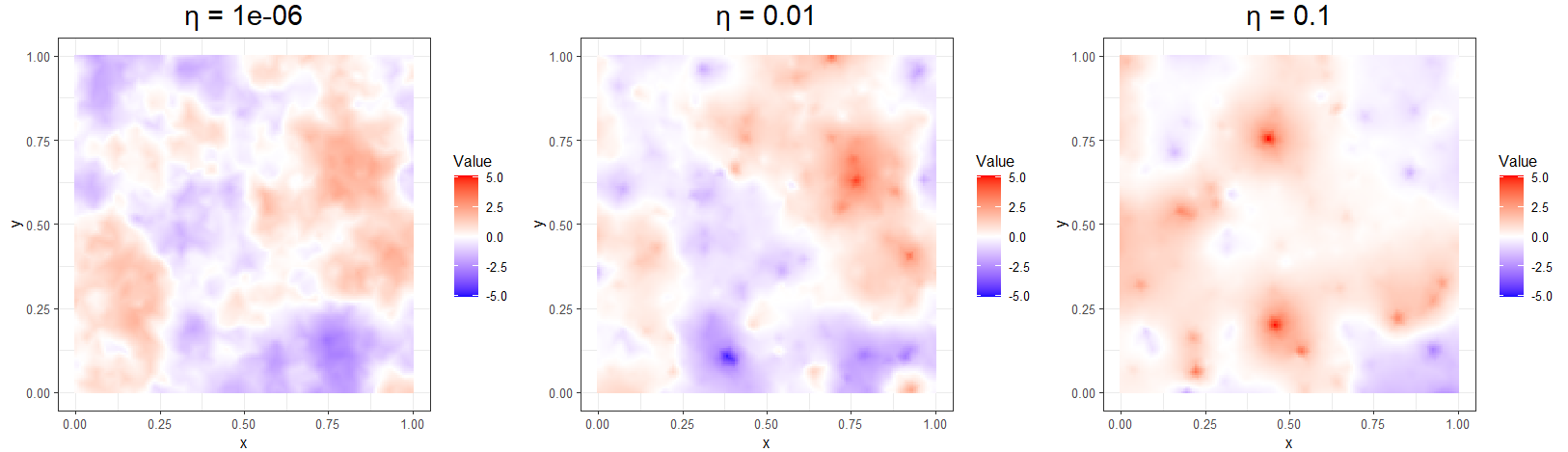}
\caption{Sample paths of a non-Gaussian Matérn model in 2D (smoothness parameter $\alpha=2$), driven by NIG noise for several values of $\eta$.}
\label{fig:sim3}
\end{figure}

%The models have the same mean and covariance matrix as the Gaussian models, but allow for heavier tailed marginals and more flexible sample path behaviours.

\subsection{Latent non-Gaussian models} \label{sect:LnGM}

%We emphasize the word ``latent'' in the title of this subsection because the INLA package allows for non-Gaussian models for the response, and our contribution is in robustifying the latent field $\mathbf{x}$. 

We construct LnGMs by replacing the multivariate Gaussian assumption on the latent field $\mathbf{x}$ of LGMs with the non-Gaussian model in  \eqref{eq:ngaussian}. The hierarchical structure is as follows: 
\begin{equation}\label{eq:lngm}
    \begin{array}{cc}
\mathrm{Response} & \mathbf{y} \mid \mathbf{x}, \boldsymbol{\theta}_{1} \sim \prod_{i \in \mathcal{I}} \pi\left(y_{i} \mid x_{i}, \boldsymbol{\theta}_{1}\right) \\ \\
\mathrm{Latent \ field} & \mathbf{x}|\boldsymbol{V},\boldsymbol{\theta}_{2} \sim \mathrm{N}\left(\mathbf{0}, \ \  \mathbf{D}(\boldsymbol{\theta}_{2})^{-1}\mathrm{diag}(\boldsymbol{V})\mathbf{D}(\boldsymbol{\theta}_{2})^{-T}\right) \\  \\
\mathrm{Mixing \ variables} & V_{i}|\eta \overset{ind.}{\sim} \mathrm{IG}(h_i,\eta^{-1} h_i^2)   \\ \\ 
\mathrm{Hyperparameters} & \boldsymbol{\theta} \sim \pi(\boldsymbol{\theta}), \ \ \eta \sim \mathrm{Exp}(\alpha_\eta)
\end{array}
\end{equation}

The main difference with the LGM structure presented in section \ref{sect:introduction} is that we further condition the latent field $\mathbf{x}$ on a vector of mixing variables $\mathbf{V}$, which enter the covariance matrix, adding more flexibility to the model. In addition, we select an exponential prior for $\eta$, which prevents overfitting the data as explained in  \cite{cabral2022controlling}. We could see the mixing variables as hyperparameters which would turn  \eqref{eq:lngm} into an LGM that we could fit in INLA. However, to guarantee fast computations and good accuracy, the INLA methodology requires the number of hyperparameters to be smaller than 20, which is generally not the case since the dimension of $\mathbf{V}$ is the same as the dimension of $\mathbf{x}$.

%As shown in \cite{cabral2022controlling} this hierarchical formulation arises in many applications, when we replace the assumption of Gaussian driving noise for the latent effects with normal-inverse Gaussian driving noise, and it is derived from the variance-mean mixture representation of the NIG random vector in  \eqref{eq:frame}.

\subsection{Coordinate ascent variational inference} \label{sect:cavi}

Variational inference (VI) methods \citep{bishop2006pattern, jordan1999introduction,wainwright2008graphical}, instead of relying on traditional Markov Chain Monte Carlo sampling schemes to approximate a posterior distribution $\pi(\mathbf{z}|\mathbf{y})$, finds a surrogate density $q(\mathbf{z})$ that solves the optimization problem:
\begin{equation}\label{eq:viequation}
q(\mathbf{z}) = \argmin_{q \in \mathcal{Q}} \left\{\mathrm{KLD}(q(\mathbf{z})|\pi(\mathbf{z}|\mathbf{y}))\right\} = \argmax_{q \in \mathcal{Q}} E_{q(\mathbf{z})}\left( \log\left(\frac{\pi(\mathbf{y},\mathbf{z})}{q(\mathbf{z})}\right)\right),    
\end{equation}
where KLD stands for the Kullback-Leibler divergence. The expectation on the right of  \eqref{eq:viequation} is the evidence lower bound (ELBO), and maximizing the ELBO is the same as minimizing the KLD since the relationship between them is given by
$$
\mathrm{ELBO}(q(\mathbf{z})) = -\mathrm{KLD}(q(\mathbf{z})|\pi(\mathbf{z}|\mathbf{y})) +\log \pi(\mathbf{y}),
$$
where $\pi(\mathbf{y})$ is the evidence. Since the KLD is always non-negative, the ELBO provides a lower bound on the log-evidence.  \cite{wang2019frequentist}, and references therein provide theoretical results on VI methods, namely, the authors extend the theory of Bernstein-von Mises to the variational posterior and establish frequentist consistency and asymptotic normality of VI methods.

% At each time step t = 1, the CAVI algorithm iteratively updates the current mean field marginal distribution q by maximizing theELBO over that marginal while keeping the other marginals {qfixed at their current values.

%, and tutorial-style introductions to MFVB can be
%found in Bishop (2006) and Ormerod and Wand (2010), \cite{blei2017variational} and \cite{tran2021practical}.

%Early work on mean field variational Bayes in machine learning and statistics can be found in Waterhouse et al. (1996)Jordan et al. (1999) and Titterington (2004)

 The space of searched functions $q(\mathbf{z})$ is restricted to a family of functions $\mathcal{Q}$, which should be flexible enough to allow for a reasonable approximation but simple enough for efficient optimization. The mean-field variational family assumes independence between each unknown variable in $\mathbf{z}$: $q(\mathbf{z})=\prod_{i=1}^N q(z_i)$. If we use $q(\mathbf{z})=\prod_{i=1}^N q(z_i)$ as a surrogate density, then the solution of the variational problem in  \eqref{eq:viequation} satisfies the system:
\begin{equation} \label{eq:CAVI}
q_i\left(z_i\right) \propto \exp \left\{E_{-i}\left(\log \pi\left(\mathbf{z}, \mathbf{y}\right)\right)\right\}, \ \ \mathrm{for} \ i=1,\dotsc,N.    
\end{equation}

Tutorial style introductions on the mean-field variational inference approach can be found in \cite{blei2017variational} and \cite{tran2021practical}. One of the most popular algorithms to maximize the ELBO is the coordinate ascend variational inference (CAVI), which iteratively updates each factor $q(z_i)$ of  \eqref{eq:CAVI} until the ELBO reaches a local optimum \citep{bishop2006pattern,blei2017variational}. A disadvantage of the mean-field family is that it cannot capture correlations between the unknown parameters. It is possible, however, to partition $\mathbf{z}$ into $k$ blocks $\mathbf{z}^{(1)}, \mathbf{z}^{(2)}, \dotsc, \mathbf{z}^{(k)}$, and assume independence between the elements of different blocks but allow for dependencies between the elements of each block: ${q(\mathbf{z})=\prod_{i=1}^k q(\mathbf{z}^{(i)})}$. This is called structured variational inference \citep{saul1995exploiting, barber1998tractable,zhang2018advances}, which we will make use of in sections \ref{sect:CAVIfull} and \ref{sect:CAVIcollapse} to account for spatial or temporal dependencies in the posterior inferences. %It has been used in other contexts by \cite{saul1995exploiting} and \cite{barber1998tractable}.

\section{Variational inference for LnGMs} \label{sect:VIresult}

We present in this section the main theoretical results that allow us to construct the CAVI algorithm to approximate the posterior distribution of $\mathbf{z} = (\mathbf{x}, \boldsymbol{\theta}, \mathbf{V}, \eta)$. 

%We present in this section the next two CAVI algorithms that approximate the posterior distribution of the unknown variables $\mathbf{z} = (\mathbf{x}, \boldsymbol{\theta}, \mathbf{V}, \eta)$ of the LnGM in  \eqref{eq:lngm}.

%We present in this section analytical approximations of the posterior distribution of $\mathbf{z} = (\mathbf{x}, \boldsymbol{\theta}, \mathbf{V}, \eta)$ of the LnGM in  \eqref{eq:lngm}.%: the latent field $\mathbf{x}$, the hyperparameters $\boldsymbol{\theta}=(\boldsymbol{\theta}_1, \boldsymbol{\theta}_2)$, and finally the mixing variables $\mathbf{V}$ and parameter $\eta$ which govern the non-Gaussianity of the LnGM. 
 
\subsection{Structured VI} \label{sect:CAVIfull}

\sloppy In the structural VI approach, we search for the optimal surrogate density ${q(\mathbf{x}, \boldsymbol{\theta}, \mathbf{V},\eta) = q(\mathbf{x}, \boldsymbol{\theta}) q(\mathbf{V})q(\eta)}$, where the only restriction in the space $\mathcal{Q}$ we are imposing is the posterior independence between the blocks $(\mathbf{x}, \boldsymbol{\theta})$, $\mathbf{V}$ and $\eta$. The result is given in Theorem \ref{theo:1}, where $q(\mathbf{x}, \boldsymbol{\theta})$ is the posterior distribution of an LGM, and the mixing variables $V_i$ and parameter $\eta$ follow a generalized inverse Gaussian distribution (GIG). The GIG distribution has pdf:
\begin{equation} \label{eq:gig}
    \pi_{\mathrm{GIG}}(x; p, a, b) =\frac{(a / b)^{p / 2}}{2 K_p(\sqrt{a b})} x^{(p-1)} e^{-(a x+b / x) / 2}, \quad x>0,
\end{equation}
where $K_\lambda(x)$ is the modified Bessel function of the second kind of order $\lambda$. 

\begin{theorem}\label{theo:1} 
\sloppy The surrogate density $q(\mathbf{x}, \boldsymbol{\theta}, \mathbf{V},\eta) = q(\mathbf{x}, \boldsymbol{\theta}) q(\mathbf{V})q(\eta)$ that minimises  $\mathrm{KLD}(q(\mathbf{x}, \boldsymbol{\theta}, \mathbf{V}, \eta)|\pi(\mathbf{x}, \boldsymbol{\theta}, \mathbf{V}, \eta|\mathbf{y}))$ is a solution of the system:
\begin{align*}
q(\mathbf{x}, \boldsymbol{\theta}) &\sim  \mathrm{pLGM}  \{ \pi(\mathbf{y}|\mathbf{x},\boldsymbol{\theta}_1), \  \mathbf{m}=\mathbf{0}, \  \mathbf{Q}  = \mathbf{D}(\boldsymbol{\theta}_2)^\top \mathrm{diag} (\mathbf{V}^{(-)}) \mathbf{D}(\boldsymbol{\theta}_2), \  \pi(\boldsymbol{\theta}) \}, \\ 
 q(V_i) &\sim \mathrm{GIG}\left( -1, \ E_{q(\eta)}(\eta^{-1}), \ E_{q(\mathbf{x}, \theta)}([\mathbf{D}\mathbf{x}]_i^2]) + h_i^2E_{q(\eta)}(\eta^{-1}) \right), \ \ i=1,\dotsc,N, \\
 q(\eta) &\sim \mathrm{GIG}\left( -N/2 + 1, \ 2\alpha_\eta, \ \sum_{i=1}^N E_{q(V_i)}(V_i) -2h_i + h_i^2 E_{q(V_i)}(V_i^{-1})  \right), 
\end{align*}
where $V_i^{(-)} = E_{q(V_i)}(V_i^{-1})$.

\end{theorem}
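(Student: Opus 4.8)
The plan is to apply the block (structured) analogue of the coordinate-ascent fixed-point equations \eqref{eq:CAVI}: for the factorisation $q(\mathbf{x},\boldsymbol{\theta},\mathbf{V},\eta)=q(\mathbf{x},\boldsymbol{\theta})\,q(\mathbf{V})\,q(\eta)$, the optimal factor for each block is proportional to $\exp$ of the expected log joint density, the expectation being taken over the other two blocks; this block version of \eqref{eq:CAVI} is the standard structured-VI result I would invoke from the cited literature. First I would write the log joint density from the hierarchy \eqref{eq:lngm},
\[
\log\pi(\mathbf{y},\mathbf{x},\boldsymbol{\theta},\mathbf{V},\eta)=\log\pi(\mathbf{y}\mid\mathbf{x},\boldsymbol{\theta}_1)+\log\pi(\mathbf{x}\mid\mathbf{V},\boldsymbol{\theta}_2)+\log\pi(\mathbf{V}\mid\eta)+\log\pi(\boldsymbol{\theta})+\log\pi(\eta),
\]
and expand the three middle terms. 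Using $\mathbf{Q}=\mathbf{D}(\boldsymbol{\theta}_2)^\top\mathrm{diag}(\mathbf{V})^{-1}\mathbf{D}(\boldsymbol{\theta}_2)$ one gets $\mathbf{x}^\top\mathbf{Q}\mathbf{x}=\sum_i [\mathbf{D}\mathbf{x}]_i^2/V_i$ and $\log\det\mathbf{Q}=2\log|\det\mathbf{D}(\boldsymbol{\theta}_2)|-\sum_i\log V_i$, while writing out the $\mathrm{IG}(h_i,\eta^{-1}h_i^2)$ and exponential densities turns the remaining terms into sums of $\log$, linear, and reciprocal monomials in $V_i$ and $\eta$. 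Each update then follows by collecting the terms carrying the relevant block and taking the prescribed expectation.

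For the $q(\mathbf{x},\boldsymbol{\theta})$ update I would keep only $\log\pi(\mathbf{y}\mid\mathbf{x},\boldsymbol{\theta}_1)$, $\log\pi(\mathbf{x}\mid\mathbf{V},\boldsymbol{\theta}_2)$ and $\log\pi(\boldsymbol{\theta})$, and apply $E_{q(\mathbf{V})}$. The quadratic term becomes $-\tfrac12\sum_i[\mathbf{D}\mathbf{x}]_i^2 V_i^{(-)}=-\tfrac12\mathbf{x}^\top\mathbf{D}(\boldsymbol{\theta}_2)^\top\mathrm{diag}(\mathbf{V}^{(-)})\mathbf{D}(\boldsymbol{\theta}_2)\mathbf{x}$, with $V_i^{(-)}=E_{q(V_i)}(V_i^{-1})$, and the expected log-determinant contributes $\log|\det\mathbf{D}(\boldsymbol{\theta}_2)|$, which equals, up to a constant independent of $\boldsymbol{\theta}$, the normalising constant $\tfrac12\log\det\big(\mathbf{D}(\boldsymbol{\theta}_2)^\top\mathrm{diag}(\mathbf{V}^{(-)})\mathbf{D}(\boldsymbol{\theta}_2)\big)$ of a Gaussian with precision $\mathbf{Q}=\mathbf{D}(\boldsymbol{\theta}_2)^\top\mathrm{diag}(\mathbf{V}^{(-)})\mathbf{D}(\boldsymbol{\theta}_2)$. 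Recognising this identity lets me read off that $\exp$ of the expected log joint is proportional to $\pi(\mathbf{y}\mid\mathbf{x},\boldsymbol{\theta}_1)\,\mathrm{N}(\mathbf{x};\mathbf{0},\mathbf{Q}^{-1})\,\pi(\boldsymbol{\theta})$, i.e. the posterior of the claimed LGM.

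For $q(V_i)$ and $q(\eta)$ the strategy is pure coefficient matching against the GIG log-density $\log\pi_{\mathrm{GIG}}(x;p,a,b)=(p-1)\log x-\tfrac12(ax+b/x)+\mathrm{const}$ from \eqref{eq:gig}. Collecting the $V_i$-terms and taking $E_{q(\mathbf{x},\boldsymbol{\theta})}$ and $E_{q(\eta)}$ gives a $-2\log V_i$ coefficient (from $-\tfrac12\log V_i$ in the Gaussian and $-\tfrac32\log V_i$ in the inverse-Gaussian), hence $p=-1$, together with linear and reciprocal coefficients producing $a=E_{q(\eta)}(\eta^{-1})$ and $b=E_{q(\mathbf{x},\boldsymbol{\theta})}([\mathbf{D}\mathbf{x}]_i^2)+h_i^2E_{q(\eta)}(\eta^{-1})$. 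Likewise, collecting the $\eta$-terms, summing over $i$ and taking $E_{q(\mathbf{V})}$ gives $-\tfrac N2\log\eta$, whence $p=-N/2+1$, the coefficient $a=2\alpha_\eta$ from the exponential prior, and the reciprocal coefficient $b=\sum_i\big(E_{q(V_i)}(V_i)-2h_i+h_i^2E_{q(V_i)}(V_i^{-1})\big)$.

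The hard part will be the $q(\mathbf{x},\boldsymbol{\theta})$ step: one must verify that the $\boldsymbol{\theta}_2$-dependent log-determinant generated by the expected Gaussian log-prior is exactly the normalising constant needed for $q(\mathbf{x},\boldsymbol{\theta})$ to be a genuine pLGM posterior \emph{jointly} in $\mathbf{x}$ and $\boldsymbol{\theta}$, and not merely a Gaussian in $\mathbf{x}$ for fixed $\boldsymbol{\theta}_2$. Once that matching is secured, the $q(V_i)$ and $q(\eta)$ updates are routine density identifications after expanding the inverse-Gaussian density.
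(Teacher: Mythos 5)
Your proposal is correct and follows essentially the same route as the paper's proof: the block CAVI fixed-point equations applied to the same decomposition of the log joint, Gaussian identification for $q(\mathbf{x},\boldsymbol{\theta})$, and GIG identification for $q(V_i)$ and $q(\eta)$ (your direct coefficient matching is equivalent to the paper's use of the GIG product identity $\pi_{\mathrm{GIG}}(x;p_1,a_1,b_1)\pi_{\mathrm{GIG}}(x;p_2,a_2,b_2)=\pi_{\mathrm{GIG}}(x;p_1+p_2-1,a_1+a_2,b_1+b_2)$). Your explicit verification that $\log|\det\mathbf{D}(\boldsymbol{\theta}_2)|$ from the expected log-prior matches the $\boldsymbol{\theta}_2$-dependent normalising constant $\tfrac12\log\det\bigl(\mathbf{D}(\boldsymbol{\theta}_2)^\top\mathrm{diag}(\mathbf{V}^{(-)})\mathbf{D}(\boldsymbol{\theta}_2)\bigr)$ up to a $\boldsymbol{\theta}$-free constant is in fact slightly more careful than the paper, which passes over this bookkeeping when asserting that $q(\mathbf{x},\boldsymbol{\theta})$ is a genuine pLGM posterior jointly in $(\mathbf{x},\boldsymbol{\theta})$.
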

\begin{proof}
See appendix \ref{app:proof1}.
\end{proof}

% \propto \pi(\mathbf{y} | \mathbf{x}, \boldsymbol{\theta}_1) \mathcal{N}(\mathbf{x}; 0,\mathbf{Q})\pi(\theta), \ \ \mathrm{where} \ \ \mathbf{Q}=\mathbf{D}^\top\mathrm{diag}(E_{q(\mathbf{V})}[V_i^{-1}])\mathbf{D}
%

\subsection{Structured and collapsed VI} \label{sect:CAVIcollapse}
Collapsed variational inference (CVI) relies on the idea of analytically integrating certain model parameters (see \cite{zhang2018advances} and references therein). Due to the reduced number of parameters to be estimated and the removal of hierarchical correlations, the inference is typically faster. For the LnGM model in  \eqref{eq:lngm} we can integrate out $\eta$ from $\pi(\mathbf{V}|\eta)$ and obtain $\pi(\mathbf{V})$:
\begin{equation} \label{eq:distV}
    \pi(\mathbf{V}) = \int_{0}^{\infty} \left(\prod_{i=1}^N \pi(V_i|\eta)\right) \pi(\eta) d\eta, 
\end{equation}
We present in Theorem \ref{theo:2} the collapsed version of Theorem \ref{theo:1}.

%We thus ``remove'' the parameter $\eta$ from the 4-stage hierarchical formulation that now does not need to be estimated. 

%[64], [83], [94], [97], [182], [188], [197]

\begin{theorem} \label{theo:2} 
\sloppy The surrogate density $q(\mathbf{x}, \boldsymbol{\theta}, \mathbf{V}) = q(\mathbf{x}, \boldsymbol{\theta}) q(\mathbf{V})$ that minimises $\mathrm{KLD}(q(\mathbf{x}, \boldsymbol{\theta}, \mathbf{V})|\pi(\mathbf{x}, \boldsymbol{\theta}, \mathbf{V}|\mathbf{y}))$ satisfies the system:
\begin{align*}
    q(\mathbf{x}, \boldsymbol{\theta}) &\sim \mathrm{pLGM} \{ \pi(\mathbf{y|\mathbf{x},\boldsymbol{\theta}_1}), \  \mathbf{m}=\mathbf{0}, \  \mathbf{Q}  = \mathbf{D}(\boldsymbol{\theta}_2)^\top \mathrm{diag} (\mathbf{V}^{(-)}) \mathbf{D}(\boldsymbol{\theta}_2), \  \pi(\boldsymbol{\theta}) \} \\ 
    q(\mathbf{V}) &\sim \int_{0}^{\infty} \left(\prod_{i=1}^N \pi_{\mathrm{GIG}}(V_i; -1,  \eta^{-1}, h_i^2\eta^{-1} + E_{q(\mathbf{x}, \boldsymbol{\theta}_2)}([\mathbf{D}\mathbf{x}]_i^2) ) \right) q(\eta )d\eta,
\end{align*} 
where $V_i^{(-)} = E_{q(V_i)}(V_i^{-1})$ and 
$q(\eta) \propto \eta^{-N/2}e^{\eta^{-1}(\sum_{i=1}^N h_i) - \alpha_\eta \eta} \prod_{i=1}^N\left( \frac{ K_{-1}(\sqrt{\eta^{-1}(d_i+h_i^2\eta^{-1})}) }{\sqrt{d_i\eta + h_i^2}}\right)$.
\end{theorem}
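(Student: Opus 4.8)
The plan is to apply the block coordinate-ascent fixed-point equations \eqref{eq:CAVI} to the two-block partition $(\mathbf{x},\boldsymbol{\theta})$ and $\mathbf{V}$ of the \emph{collapsed} model, whose joint density factors as $\pi(\mathbf{y},\mathbf{x},\boldsymbol{\theta},\mathbf{V}) = \pi(\mathbf{y}\mid\mathbf{x},\boldsymbol{\theta}_1)\,\pi(\mathbf{x}\mid\mathbf{V},\boldsymbol{\theta}_2)\,\pi(\mathbf{V})\,\pi(\boldsymbol{\theta})$ with $\pi(\mathbf{V})$ the marginal \eqref{eq:distV} obtained after integrating out $\eta$. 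In contrast to Theorem \ref{theo:1}, $\eta$ is no longer a variational block; the factor $q(\eta)$ appearing in the statement will instead emerge as the mixing density of a scale-mixture representation of the optimal $q(\mathbf{V})$. I would mirror the structure of the proof of Theorem \ref{theo:1} in appendix \ref{app:proof1}, modifying only the $\mathbf{V}$-block update, and the task reduces to showing that the two CAVI factors take the claimed forms and exhibiting that mixture.

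For the $(\mathbf{x},\boldsymbol{\theta})$ factor I take $\log q(\mathbf{x},\boldsymbol{\theta}) = E_{q(\mathbf{V})}\log\pi(\mathbf{y},\mathbf{x},\boldsymbol{\theta},\mathbf{V}) + \mathrm{const}$. The response and prior terms are unaffected by the expectation, and the only $\mathbf{V}$-dependence of $\log\pi(\mathbf{x}\mid\mathbf{V},\boldsymbol{\theta}_2)$ surviving as a function of $(\mathbf{x},\boldsymbol{\theta})$ is the quadratic form $-\tfrac12\sum_i[\mathbf{D}(\boldsymbol{\theta}_2)\mathbf{x}]_i^2\,E_{q(V_i)}(V_i^{-1})$, since each $E_{q(V_i)}(\log V_i)$ is a constant. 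Writing $V_i^{(-)}=E_{q(V_i)}(V_i^{-1})$ collapses this to $-\tfrac12\,\mathbf{x}^\top\mathbf{D}(\boldsymbol{\theta}_2)^\top\mathrm{diag}(\mathbf{V}^{(-)})\mathbf{D}(\boldsymbol{\theta}_2)\mathbf{x}$, so that, up to a factor constant in $(\mathbf{x},\boldsymbol{\theta})$, the latent layer is Gaussian with mean $\mathbf{0}$ and precision $\mathbf{Q}=\mathbf{D}(\boldsymbol{\theta}_2)^\top\mathrm{diag}(\mathbf{V}^{(-)})\mathbf{D}(\boldsymbol{\theta}_2)$; this identifies $q(\mathbf{x},\boldsymbol{\theta})$ as the stated $\mathrm{pLGM}$ posterior.

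For the $\mathbf{V}$ factor, $\log q(\mathbf{V}) = E_{q(\mathbf{x},\boldsymbol{\theta})}\log\pi(\mathbf{x}\mid\mathbf{V},\boldsymbol{\theta}_2) + \log\pi(\mathbf{V}) + \mathrm{const}$, whence $q(\mathbf{V}) \propto \big(\prod_i V_i^{-1/2}e^{-d_i/(2V_i)}\big)\,\pi(\mathbf{V})$ with $d_i = E_{q(\mathbf{x},\boldsymbol{\theta}_2)}([\mathbf{D}\mathbf{x}]_i^2)$. Substituting \eqref{eq:distV}, moving the product inside the $\eta$-integral, and using that $\mathrm{IG}(h_i,\eta^{-1}h_i^2)$ is the $\pi_{\mathrm{GIG}}(\,\cdot\,;-1/2,\eta^{-1},h_i^2\eta^{-1})$ kernel, I would absorb the Jacobian factor $V_i^{-1/2}$ and the exponent $e^{-d_i/(2V_i)}$ to recast each factor as the kernel of $\pi_{\mathrm{GIG}}(V_i;-1,\eta^{-1},d_i+h_i^2\eta^{-1})$. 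The decisive step is then renormalising each kernel via \eqref{eq:gig}: the identity $\int_0^\infty V_i^{-2}e^{-(aV_i+b/V_i)/2}\,dV_i = 2K_{-1}(\sqrt{ab})\,(a/b)^{1/2}$ with $a=\eta^{-1}$ and $b=d_i+h_i^2\eta^{-1}$ produces exactly $K_{-1}(\sqrt{\eta^{-1}(d_i+h_i^2\eta^{-1})})/\sqrt{d_i\eta+h_i^2}$. Collecting these normalisers with the $\eta$-dependent prefactors $\eta^{-1/2}e^{h_i/\eta}$ of each $\mathrm{IG}$ density and the prior $\pi(\eta)\propto e^{-\alpha_\eta\eta}$ across $i=1,\dots,N$ yields the stated $q(\eta)$, while the leftover integrand is precisely $\prod_i\pi_{\mathrm{GIG}}(V_i;-1,\eta^{-1},d_i+h_i^2\eta^{-1})$, giving the claimed mixture form of $q(\mathbf{V})$. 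The main obstacle is the bookkeeping here: tracking which pieces are constants, which depend on $\mathbf{V}$, and which depend on $\eta$, and in particular placing the Bessel normaliser and the power of $\sqrt{d_i\eta+h_i^2}$ on the correct side of the fraction, since the GIG sign convention for $p$ (and $K_{-1}=K_1$) is exactly what flips $\sqrt{d_i\eta+h_i^2}$ between numerator and denominator of $q(\eta)$.
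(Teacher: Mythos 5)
Your proposal is correct and follows essentially the same route as the paper's proof in appendix \ref{app:proof2}: the same two-block CAVI fixed-point equations on the collapsed model, the same identification of $q(\mathbf{x},\boldsymbol{\theta})$ as the $\mathrm{pLGM}$ posterior, and the same recasting of $q(\mathbf{V})$ as an $\eta$-mixture of $\pi_{\mathrm{GIG}}(V_i;-1,\eta^{-1},d_i+h_i^2\eta^{-1})$ kernels, with the Bessel normalisers $K_{-1}(\sqrt{\eta^{-1}(d_i+h_i^2\eta^{-1})})/\sqrt{d_i\eta+h_i^2}$ and the leftover factors $\eta^{-1/2}e^{h_i/\eta}$ and $e^{-\alpha_\eta\eta}$ absorbed into $q(\eta)$. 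The bookkeeping you flag as the main obstacle is handled correctly, including the sign convention that places $\sqrt{d_i\eta+h_i^2}$ in the denominator.
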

\begin{proof}
See appendix \ref{app:proof2}.
\end{proof}
\sloppy Since $q(\mathbf{V})$ in Theorem \ref{theo:2} has the form $\int_0^\infty (\prod_{i=1}^N q(V_i|\eta)) q(\eta) d\eta$ then samples from $q(\mathbf{V})$ can be obtained by first sampling from $q(\eta)$ and then sampling from ${q(V_i | \eta) \sim \mathrm{GIG}(-1, \eta^{-1}, h_i^2\eta^{-1} +  E_{q(\mathbf{x}, \boldsymbol{\theta}_2)}([\mathbf{D}\mathbf{x}]_i^2))}$ for each element $V_i$. Also, since $q(\eta)$ is one-dimensional, an inverse transform sampler can be built by numerically approximating the quantile function.

\subsection{Extensions}

As mentioned in section \ref{sect:simplified}, the results can be extended for models with several latent effects, for instance, one latent effect for time and another for space. Moreover, we can consider other longer-tailed distributions for the driving noise, such as the $t$-Student distribution. We discuss these extensions in appendix \ref{sect:extensions}.

\section{Implementation and R package} \label{sect:implementation}

We present in this section two CAVI algorithms that perform approximate inference of LnGMs based on Theorems \ref{theo:1} and \ref{theo:2}. These algorithms are implemented in the R package \verb|ngvb|, which we present at the end of this section. 
	
\subsection{CAVI algorithms}

Algorithm \ref{alg:MFVI1} is the CAVI algorithm that iteratively update the surrogate densities  $q(\mathbf{x}, \boldsymbol{\theta})$, $q(V_i)$ and $q(\eta)$ of the structured variational inference (SVI) approximation. The SVI algorithm requires computing moments of order $k$ of the GIG distribution in  \eqref{eq:gig} which we abbreviate to: $\mathrm{m.GIG(order=k, p, a, b)}$. We do not use the ELBO as a criterion for convergence because it is too expensive to compute, and instead, we stop the algorithm when the change of the moment $E_{q(\eta)}[\eta]$ has fallen below a given threshold. The structured and collapsed variational inference (SCVI) algorithm is shown in Algorithm \ref{alg:MFVI2}.

 %As we will explain in more detail in section \ref{sect:interpretation}, the expectation $d_i = E_{q(\mathbf{x},\boldsymbol{\theta}_2)}([\mathbf{D}(\boldsymbol{\theta}_2)\mathbf{x}]_i^2)$ is large when there is an irregular path behavior at index $i$, such as a sudden jump. In turn, the mixing variable $V_i$ will also tend to be large, which will decrease the influence of this irregular path behavior the next time the algorithm fits an LGM.

\begin{algorithm}[h]
\caption{Structured VI (SVI) algorithm for LnGMs}\label{alg:MFVI1}
\begin{algorithmic}[1]
\Require  INLA,  $\mathbf{D(\boldsymbol{\theta}_2)}$, $\mathbf{h}$
%\Ensure $y = x^N$
\State $N \gets$ dimension of latent field $\mathbf{x}$;
\State $\mathbf{V}^{(-)} \gets \mathbf{h}$;
\State $\eta^{(-)} \gets 0.5$;
%\State$\eta_{0,\star} \gets 0.5$
%\State $i \gets 1 $;

\Do
	\State $q(\mathbf{x},\boldsymbol{\theta)} \sim \mathrm{pLGM} \{ \pi(\mathbf{y|\mathbf{x},\boldsymbol{\theta}_1}),   \mathbf{m}=\mathbf{0},   \mathbf{Q}  = \mathbf{D(\boldsymbol{\theta}_2)}^\top \mathrm{diag} (\mathbf{V}^{(-)}) \mathbf{D(\boldsymbol{\theta}_2)}, \pi(\boldsymbol{\theta}) \}$
 
    \For{$i \in \{1,\dotsc,N\}$}
        \State $d_i \gets E_{q(\mathbf{x},\boldsymbol{\theta}_2)}([\mathbf{D}(\boldsymbol{\theta}_2)\mathbf{x}]_i^2)$; 
        \State $V_{i}^{(-)} \gets$ m.GIG(order = -1, p = -1, a = $\eta^{(-)}$, b = $d_i + h_i^2\eta^{(-)}$); 
        \State $V_{i}^{(+)} \gets$ m.GIG(order = 1, p = -1, a = $\eta^{(-)}$, b = $d_i + h_i^2\eta^{(-)}$);

    \EndFor
    \State $\eta^{(-)} \gets $ m.GIG(order = -1, p = $-N/2 + 1$, a = $2\alpha_\eta$, b = $\sum_{k=1}^N V_{k}^{(+)} - 2h_i + h_i^2V_{k}^{(-)}$); 
    %\State $\eta_{i,\star} \gets $ moment.GIG(order = 1,p =  $-N/2 + 1$, a = $2\alpha_\eta$, b = $\sum_{k=1}^N V_{k,\star} - 2h_i + h_i^2V_{k,\star}^{(-)}$); 
 %   \State $i \gets i + 1;$

\doWhile{$E[q(\eta)]$ does not converge}
%\Save{$\mathbf{d}$, $\mathbf{V}_\star^{(-)}$, $\mathbf{V}_\star$, $\eta_{i,\star}$}
\end{algorithmic}
\end{algorithm}

\begin{algorithm}[h]
\caption{Structured and collapsed VI (SCVI) algorithm for LnGMs}\label{alg:MFVI2}
\begin{algorithmic}[1]
\Require  INLA, $\mathbf{D(\boldsymbol{\theta}_2)}$, $\mathbf{h}$, $m$
	\State $N \gets$ dimension of latent field $\mathbf{x}$;
    \State $\mathbf{V}^{(-)} \gets \mathbf{h}$;
    \Do
		\State Steps 5-7 of Algorithm \ref{alg:MFVI1}
    	\State Get $m$ samples from $q(\mathbf{V})$; \Comment{Using Theorem \ref{theo:2} }%and previously computed $\mathbf{d}$}
    	\State $\mathbf{V}^{(-)} \gets E_{q(\mathbf{V})}( [V_1^{-1}, V_2^{-1}, \dotsc, V_N^{-1}]^\top)$;  \Comment{Approximated by Monte Carlo}
 %   \State $i \gets i + 1$;

\doWhile{$E[q(\eta)]$ does not converge}  
%\Save{$\mathbf{d}$, $\mathbf{V}_\star^{(-)}$}
\end{algorithmic}
\end{algorithm}

\subsection{Fitting LGMs using INLA}

The algorithms presented before are essentially a wrapper to algorithms that fit LGMs since they recursively fit LGMs for fixed values of the mixing variables $\mathbf{V}$ until convergence. The speed of both CAVI algorithms is then largely determined by the time it takes to fit each LGM and how many iterations are needed to obtain convergence.

Under conjugacy, the posterior distributions of the parameters of LGMs can be found analytically. Generally, however, we have to resort to approximation methods, such as MCMC or Integrated Nested Laplace Approximations (INLA), to fit LGMs \citep{rue2007approximate, rue2017bayesian}. For LGMs,  MCMC sampling can be hampered by slow computation time and poor convergence, and INLA can provide accurate and fast analytical approximations \citep{opitz2017latent, blangiardo2013spatial}. 

First, INLA approximates the posterior of the hyperparameters $\boldsymbol{\theta}$ by computing:
$$
\left.\tilde{\pi}_{\mathrm{LA}}(\boldsymbol{\theta} \mid \mathbf{y}) \propto \frac{\pi\left(\mathbf{x}, \boldsymbol{\theta} \mid \mathbf{y}\right)}{\tilde{\pi}_{\mathrm{G}}\left(\mathbf{x} \mid \boldsymbol{\theta}, \mathbf{y}\right)}\right|_{\mathbf{x}=x^*(\boldsymbol{\theta})}
$$
where $\tilde{\pi}_{\mathrm{G}}\left(\mathbf{x}^* \mid \boldsymbol{\theta}, \mathbf{y} \right)$ is the Gaussian approximation obtained by matching the mode ($x^*(\boldsymbol{\theta})$) and curvature at the mode of the full joint density $\pi(\mathbf{x} \mid \boldsymbol{\theta}, \mathbf{y})$. The next step is to approximate the joint posterior, which will be a mixture of skew-Gaussian copula densities $\pi_{\mathrm{SGC}}$ \citep{chiuchiolo2022extended}, given by:
$$
\pi(\boldsymbol{x}, \boldsymbol{\theta} \mid \boldsymbol{y}) \propto \sum_k \pi_{\mathrm{SGC}}(\boldsymbol{x} \mid \boldsymbol{\theta}_k, \boldsymbol{y}) \pi(\boldsymbol{\theta}_k \mid \boldsymbol{y})  \Delta_k.
$$
The previous approximation considers only a set of hyperparameter values $\left\{\boldsymbol{\theta}_k\right\}_{k=1}^K$ with associated integration weights $\left\{\Delta_k\right\}_{k=1}^K$. INLA obtains these integration points by placing a regular grid about the posterior mode of $\boldsymbol{\theta}$ or using a central composite design centered at the posterior mode. We use this approximation to compute step 7 of Algorithm \ref{alg:MFVI1}.

\subsection{The ngvb package}

Algorithms \ref{alg:MFVI1} and \ref{alg:MFVI2} are implemented in the R package \verb|ngvb| (non-Gaussian variational Bayes), which uses INLA to fit the LGM of step 5 of Algorithm \ref{alg:MFVI1}. The development version of this package can be found in \url{github.com/rafaelcabral96/ngvb}. 

\sloppy The most convenient manner of fitting LnGMs with \verb|ngvb| is by first fitting an LGM model using INLA and then using the \verb|inla| object as the input of the \verb|ngvb| function. For instance, fitting a latent non-Gaussian RW1 takes the form:

\begin{verbatim}
 data    <- list(x = 1:100, y = g(1:100))
 LGM     <- inla(formula = y ~  f(x,  model = "rw1"), data = data)
 LnGM    <- ngvb(fit = LGM, selection = list(x=1:100))
\end{verbatim}

The \verb|selection| argument specifies which components of the LGM are to be extended to non-Gaussianity. This functionality is currently available for i.i.d., RW1, RW2, and AR1 models, and we plan to extend it to other commonly used models. Alternatively, other LnGMs can be specified with the argument \verb|manual.configs| as detailed in the package documentation. The package uses, by default, the SCVI approximation. The SVI approximation or a built-in Gibbs sampler can be chosen through the argument \verb|method = "SVI"| or \verb|method = "Gibbs"|. The built-in Gibbs sampler iterates between the full conditionals of $(\mathbf{x},\boldsymbol{\theta})$, $\mathbf{V}$ and $\eta$. For each iteration, it finds the full conditional of $(\mathbf{x},\boldsymbol{\theta})$ by fitting an LGM in INLA, which is time-consuming but can be used in low-dimensional problems. More details about the Gibbs sampler can be found in appendix \ref{sect:Gibbs}.
 %Section \ref{sect:illustration} shows an example on how to use the \verb|ngvb| function and t
 
  The R generic functions \verb|print|, \verb|summary|, \verb|plot|, and \verb|fitted| are available to process the output of the \verb|ngvb| function. Also, the function \verb|simulate| can be used to obtain samples from the posterior distributions of $\mathbf{x}, \boldsymbol{\theta}, \mathbf{V}$, and $\eta$.

\subsection{Improved predictions}
The surrogate $q(\mathbf{x})$ is the posterior distribution of an LGM, which downweights the influence of outlier events. Under weak conditions \cite{chiuchiolo2022extended} showed that for LGMs, $\pi(\mathbf{x}|\mathbf{y}) \leq C \pi(\mathbf{x})$, where $\pi(\mathbf{x})$ is the Gaussian prior for $\mathbf{x}$ and $C$ is a constant. Therefore, the marginals of $q(\mathbf{x})$ cannot have tails heavier than Gaussian tails. This happens because, in the VI approximation, the latent field $\mathbf{x}$ and mixing variables $\mathbf{V}$ have independent posterior distributions. We can, however, obtain leptokurtic approximate posterior samples of $\mathbf{x}$. We first sample from $q(\mathbf{V})$ and $q(\boldsymbol{\theta})$ obtained with the VI approximation, and then we generate samples from $\pi(\mathbf{x}|\mathbf{y},\mathbf{V},\boldsymbol{\theta})$ using INLA. In the \verb|ngvb| package, leptokurtic samples of $\mathbf{x}$ can be obtained by adding the argument \verb|improved.tail=TRUE| in the \verb|simulate| method.

\section{Illustration} \label{sect:illustration}

In this illustrative example, we take inspiration from the longitudinal study in \cite{asar2020linear} in which measurements related to the kidney function of several patients were recorded over time. The goal was to predict kidney function from noisy measurements. LGMs did not adapt well to sudden drops in measurements, which was problematic since these drops are an example of ``acute kidney injury", which should prompt an immediate medical intervention. We consider here a simpler simulated example that demonstrates the same phenomenon. The longitudinal model was the following:
$$
y_{ij} =\sigma_x \mathbf{x}_{i} +  \sigma_\epsilon\epsilon_{ij},     \ \ \ \ i=1,\dotsc,10, \ \ j=1,\dotsc,100,
$$
\sloppy where $\epsilon_{ij}$ are independent Gaussian noise to model within-subject error, the vectors $\mathbf{x}_{i}$ are mutually independent, and each follows a Matérn model with fixed range parameter ${\kappa=0.001}$. For the simulated data shown in Fig. \ref{fig:longdata}, we fixed $\sigma_x=1$, $\sigma_\epsilon=0.1$, standardized each replicate, and for the last replicate, we further added sudden jumps of sizes -6 and 11 at time points 20 and 40. Finally, we fitted the simulated data to an LGM and LnGM, where the previous Matérn model was driven by Gaussian and NIG noise, respectively. 

\begin{figure}[h]
\centering
\includegraphics[width=\linewidth]{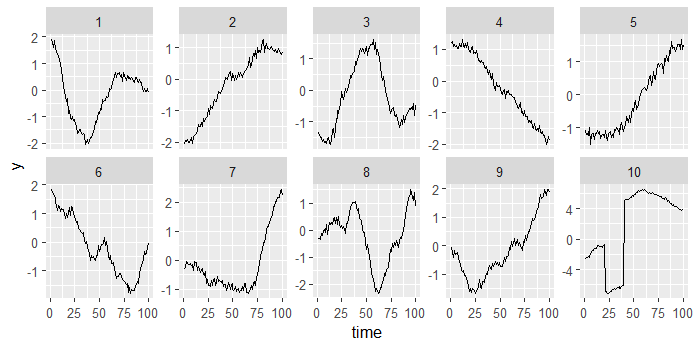}
\caption{Simulated longitudinal data.}
\label{fig:longdata}
\end{figure}

We fitted the LGM to this data using INLA and the LnGM using STAN and the SVI and SCVI algorithms in the $\verb|ngvb|$ package, utilizing the default INLA priors and the prior $\eta \sim \text{Exp}(5)$. We show the smoothed processes for the first and last replicates in Fig. \ref{fig:longfit}.

The price to pay for settling with an LGM that cannot accommodate the two jumps is more uncertainty in the smoothed process and less accurate predictions. For the LGM, the marginal standard deviation of the smoothed process increased from 0.04 to 0.14 when we added the two jumps to the last realization, while for the LnGM, it remained at 0.04. This example shows the non-robustness of the LGM as a result of discrepancies in the data. Two sudden jumps in 1000 observations were sufficient to essentially triple the marginal standard deviation of the smoothed process.

 We can also see that the sudden jumps were oversmoothed for the LGM while the LnGM captured them well. Furthermore, the Bayesian leave-one-out estimate of out-of-sample predictive fit \citep{vehtari2017practical} increased from -341 to 685 when we extended the LGM to an LnGM using SVI and SCVI approximations. Finally, the smoothed processes obtained with the VI approximations were essentially the same as the ones obtained from STAN. However, the STAN fit took 35 minutes (4 parallel chains, 400 warmup, 800 sampling iterations), while the VI algorithms converged in less than a minute.

\begin{figure}[h]
\centering
\includegraphics[width=0.45\linewidth]{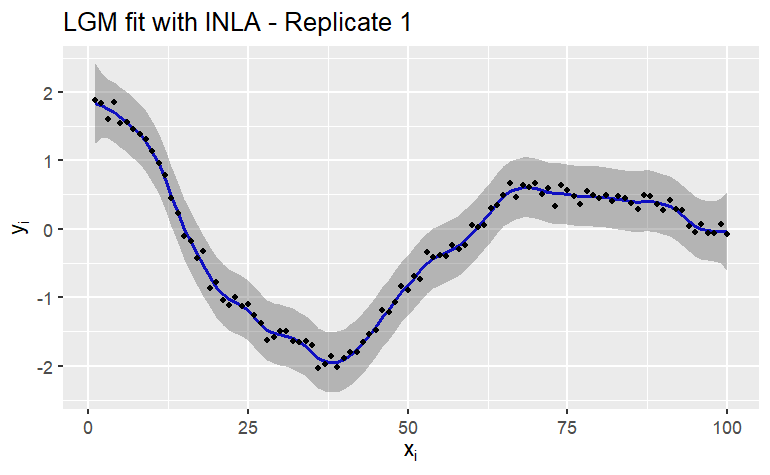}
\includegraphics[width=0.45\linewidth]{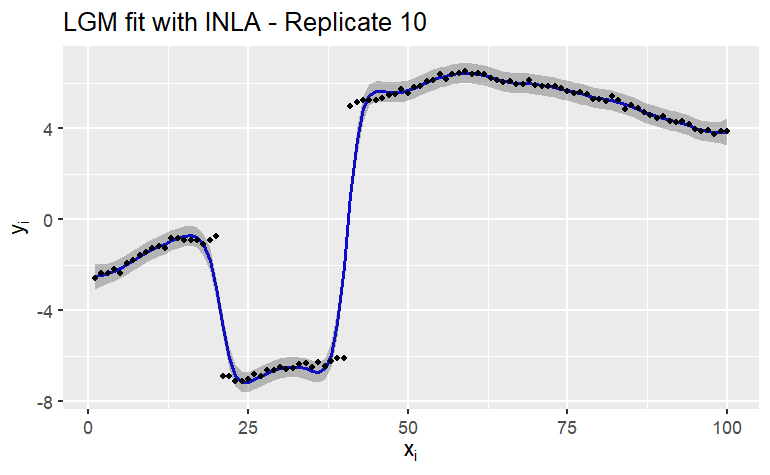}
\includegraphics[width=0.45\linewidth]{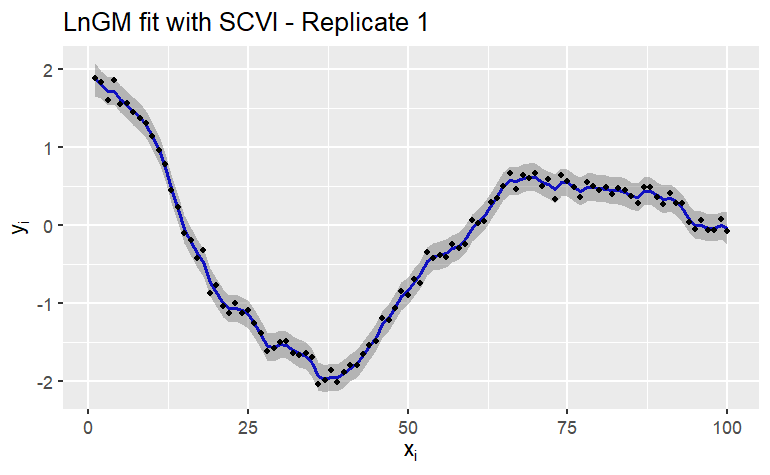}
\includegraphics[width=0.45\linewidth]{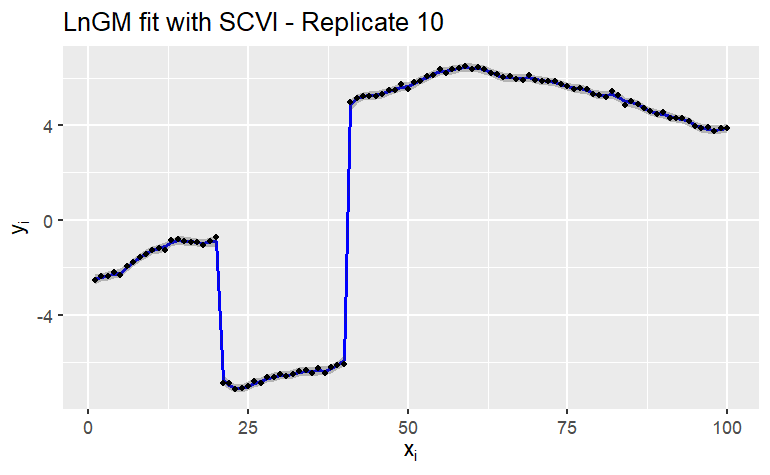}
\includegraphics[width=0.45\linewidth]{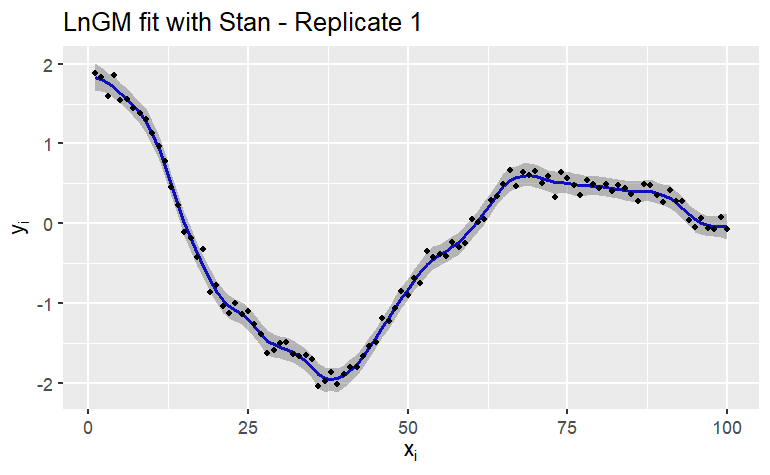}
\includegraphics[width=0.45\linewidth]{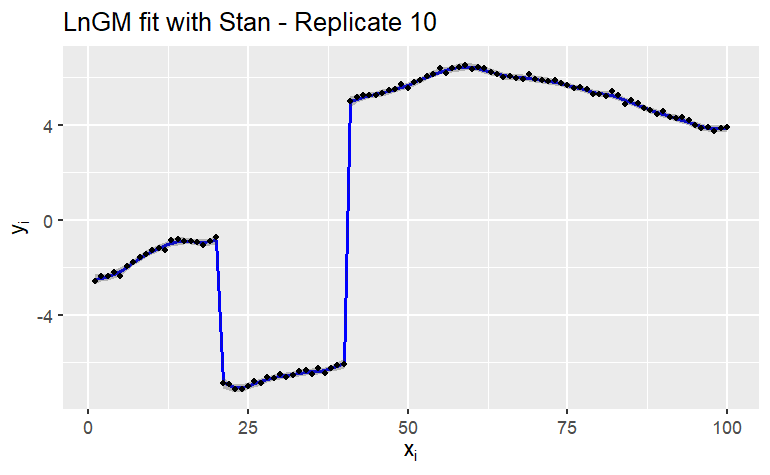}
\caption{The data is shown as black circle markers. The lines represent the posterior mean of the latent process and the shaded area the 97.5\% credible intervals.}
\label{fig:longfit}
\end{figure}

 We show in Fig.~\ref{fig:progression} the progression of the posterior means of $V_{ij}$ for  ${i=1,\dotsc,10,}$ ${j=1,\dotsc,100},$ for both the SVI and SCVI algorithm. The SVI algorithm converged in 25 iterations, while the SCVI algorithm converged in 7 iterations, roughly. We can immediately recognize 4 mixing variables with large posterior means related to the last replicate, which correspond to time points 21, 22, 41, and 42, right after the time points of the sudden jumps. For these mixing variables, the VI approximations $q(V_{ij})$ tend to underestimate the true posterior mean and variance, as will be shown in the simulation study in section \ref{sect:simu}. However, having a good approximation for the latent process $q(\mathbf{x})$ is often the main interest.

\begin{figure}[h]
   \centering
  \includegraphics[width=0.49\linewidth]{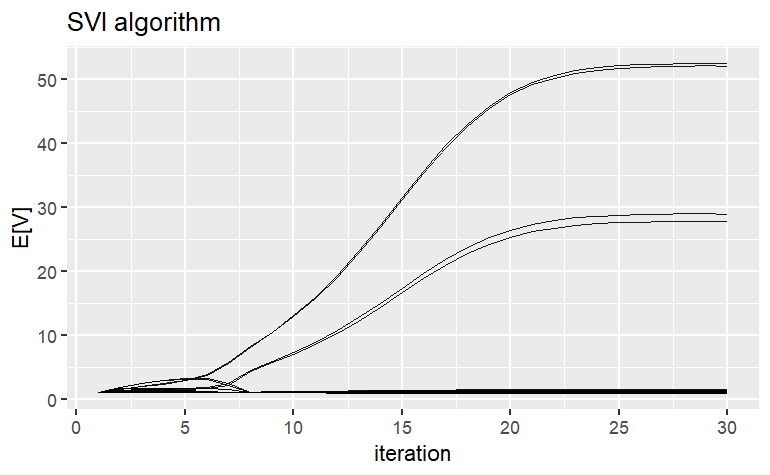} 
  \includegraphics[width=0.49\linewidth]{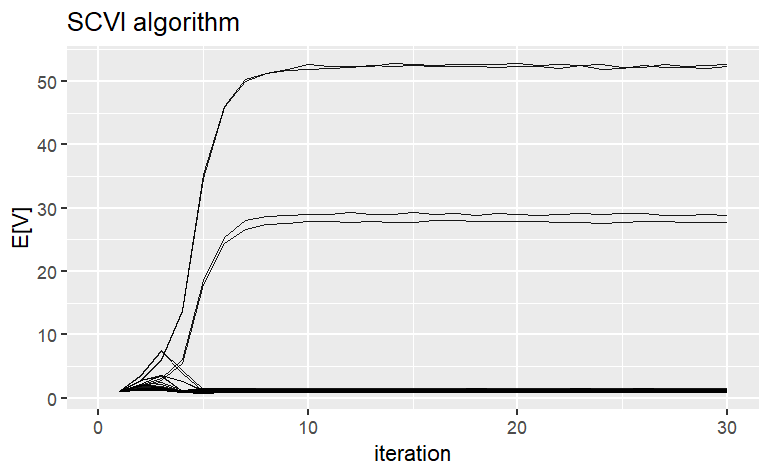} 
  \caption{Means of $q(V_i)$ across the iterations for the SVI algorithm (left) and SCVI algorithm (right). }
  \label{fig:progression}
\end{figure}

\subsection{Interpretation} \label{sect:interpretation}
 Robustness can be obtained by a model that appropriately downweights extreme events \citep{box1980sampling, o2012bayesian, desgagne2015robustness}. \cite{box1980sampling} also said, ``Efficient model building requires both diagnostic checking and model robustification, where
 by robustification I mean judicious and grudging elaboration of the model to ensure protection against particular hazards".   
 %o2012bayesian
 
To understand how model robustification happens automatically for LnGMs, we can look at the ``messages'' that are exchanged between the surrogate densities $q(\mathbf{x}, \boldsymbol{\theta})$, $q(\mathbf{V})$ and $q(\eta)$ in Theorem \ref{theo:1}, which are shown in Fig. \ref{fig:messages}. %although the former theory has been derived to conjugate-exponential models only.  

%\footnote{This concept of ``message'' is different to the one in the variational message passing algorithm in \citep{winn2005variational} and \cite{minka2005divergence}. Here it means the moments of another surrogate that are explicit in a given surrogate's density.} 
\begin{figure}[h]  
  \center
  \includegraphics[width=0.6\linewidth]{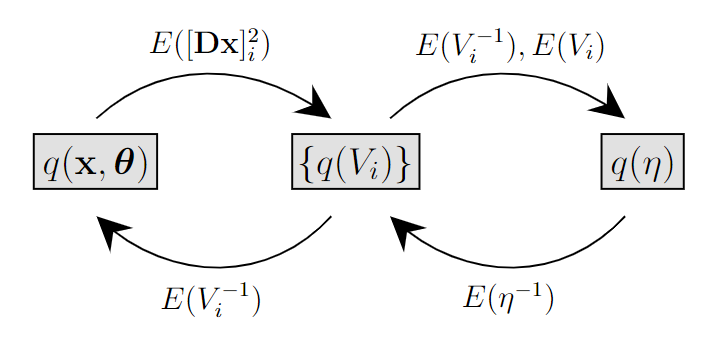}
   \caption{Messages exchanged between the surrogate densities in  Theorem \ref{theo:1}.}
    \label{fig:messages}
\end{figure}

 The surrogate $q(V_i)$ receives one message from $q(\mathbf{x}, \boldsymbol{\theta})$, which is $d_i = E_{q(\mathbf{x}, \theta)}([\mathbf{D}\mathbf{x}]_i^2])$ and another message from $q(\mathbf{\eta})$ which is $E_{q(\eta)}(\eta^{-1})$. The first message conveys how many and how large are the extreme events in the latent field. For the AR1 model, $[\mathbf{D}\mathbf{x}]_i= x_{i}-\rho x_{i-1}$ will be large whenever there is a sudden jump at location $i$ (when the autocorrelation $\rho$ is high). Thus, diagnostic checking is manifested in our algorithm by the identification of particular path features that are unlikely in the Gaussian model, which in this case are sudden jumps. The second message conveys how much non-Gaussianity is present at the current iteration (the smaller this message, the more non-Gaussianity is believed to be present). Then, the distribution of $q(V_i)$ is given by a compromise between these two messages, and samples of $V_i$ will typically take larger values the higher $d_i$ and the smaller $E_{q(\eta)}[\eta^{-1}]$. The surrogate $q(\mathbf{x}, \boldsymbol{\theta})$ is a LGM proportional to  $\pi(\mathbf{y} | \mathbf{x}, \boldsymbol{\theta}_1)\pi(\mathbf{x}|\mathbf{V}^{(-)},\boldsymbol{\theta}_2)\pi(\boldsymbol{\theta})$ and it receives the messages $V_i^{(-)} = E_{q(V_i)}[V_i^{-1}], \ i=1,\dotsc,N$.  Downweigthing of outlier events in the latent process occurs because:
  \[\log \pi(\mathbf{x}|\mathbf{V}^{(-)}, \boldsymbol{\theta}) = -\frac{1}{2} \mathbf{x^\top D^\top}\mathrm{diag}(\mathbf{V}^{(-)}) \mathbf{D x} + \mathrm{const} = -\frac{1}{2} \times \sum_{i=1}^N  [\mathbf{D}\mathbf{x}]_i^2/V_i^{(-)} + \mathrm{const},\] 
  therefore when $V_i^{(-)}$ is large, the influence of a large event of $[\mathbf{D}\mathbf{x}]_i$  is reduced.
 
 %in the estimation
 
%Other mechanisms exist in the literature to obtain robustness, including reweighted probabilistic models (see \citep{wang2017robust} for a VI implementation), which involve raising each data likelihood to a weight and inferring the weights along with the pre-existing model parameters. Although being less generic, our approach leads to more interpretable models and is derived from a well-established stochastic process theory \citep{podgorski2011estimation,bolin2014spatial} where downweighting of outlier events happen automatically. 

  \section{Simulation study} \label{sect:simu}
  
 % \subsection{Number of latent variables increases with data}

  If the number of latent variables increases as the data grows, the frequentist consistency of VI methods in \cite{wang2019frequentist} does not apply. We ran a simulation to study the quality of the VI approximations in this case. We considered the model $y_i = \sigma_x x_i + \sigma_y \epsilon_i, \ i=1,\dotsc,N$, where $\epsilon_i$ is standard Gaussian noise, and $\mathbf{x}=[x_1,\dotsc,x_{N}]^\top$ follows a non-Gaussian AR1 prior, defined by $x_{i}=\rho x_{i-1} + \Lambda_i, i=2,\dotsc,N$, where $\rho=0.9$ and $\Lambda_i$ is NIG noise with parameter $\eta$. To fit the models we chose the priors $\tau_y = 1/\sigma_y^2 \sim \mathrm{Gam}(1,0.5)$, $\tau_x = 1/\sigma_x^2 \sim \mathrm{Gam}(1,0.5)$ (using the shape and rate parameterization) and $\eta \sim \mathrm{Exp}(5)$.
  
  We fitted the previous model to simulated data considering several scenarios for the time series. We varied the dimension $N \in \{100,500,1000\}$ and the non-Gaussianity parameter $\eta \in \{0, 0.5, 1, 5, 10, 100\}$. The other parameters were $\sigma_y=1, \sigma_x=2$, and $\rho=0.9$. We ran several replications so that there would be at least 10000 approximations of the mixing variables $V_i$ to analyze in each scenario. We fitted the models in Stan and with $\verb|ngvb|$ using the SVI and SCVI approximations. Finally, we compared the posterior means and standard deviations of $\mathbf{x}$, $\mathbf{V}$, and $\eta$, and the results are shown in Fig. \ref{fig:algcomparizon}.
  
   We ran the simulations on a computer server with 64 cores with 2.10GHz (Intel\textregistered \ Xeon\textregistered \ Gold 6130). We stopped the SCVI algorithm when $E[\eta]$ varied less than 0.5\% per iteration, and for the SVI, we considered a fixed number of 40 iterations. The VI algorithms were faster and scaled better, as shown in Table \ref{fig:timecomparizon}.

The posterior means of $\mathbf{x}$ obtained with the VI approximations were very similar to those obtained from Stan, while there was more discrepancy in the standard deviations. For mixing variables $V_i$, we can observe a negative bias when the true posterior mean of $V_i$ is larger than one and a positive bias otherwise.  In addition, the posterior standard deviations of $\mathbf{V}$ and $\eta$ were overall underestimated, which is typical for VI approximations based on KLD optimization \citep{minka2005divergence}. Also, the VI approximations underestimated the posterior mean of $\eta$, and compared to the mixing variables $V_i$, the posterior means of $\eta$ were more dispersed. Therefore, we should pay more attention to the mixing variables $V_i$ when judging the non-Gaussianity of a model, as the VI approximation is more reliable for these variables than for $\eta$.

%  If the number of latent variables was fixed, such as in a smoothing splines model with a fixed number of knots, then the approximation quality would improve as more data is available, which did not seem to happen in our simulations.

\begin{figure}[htp]
\begin{tabular}{cc}
  \includegraphics[width=0.49\linewidth]{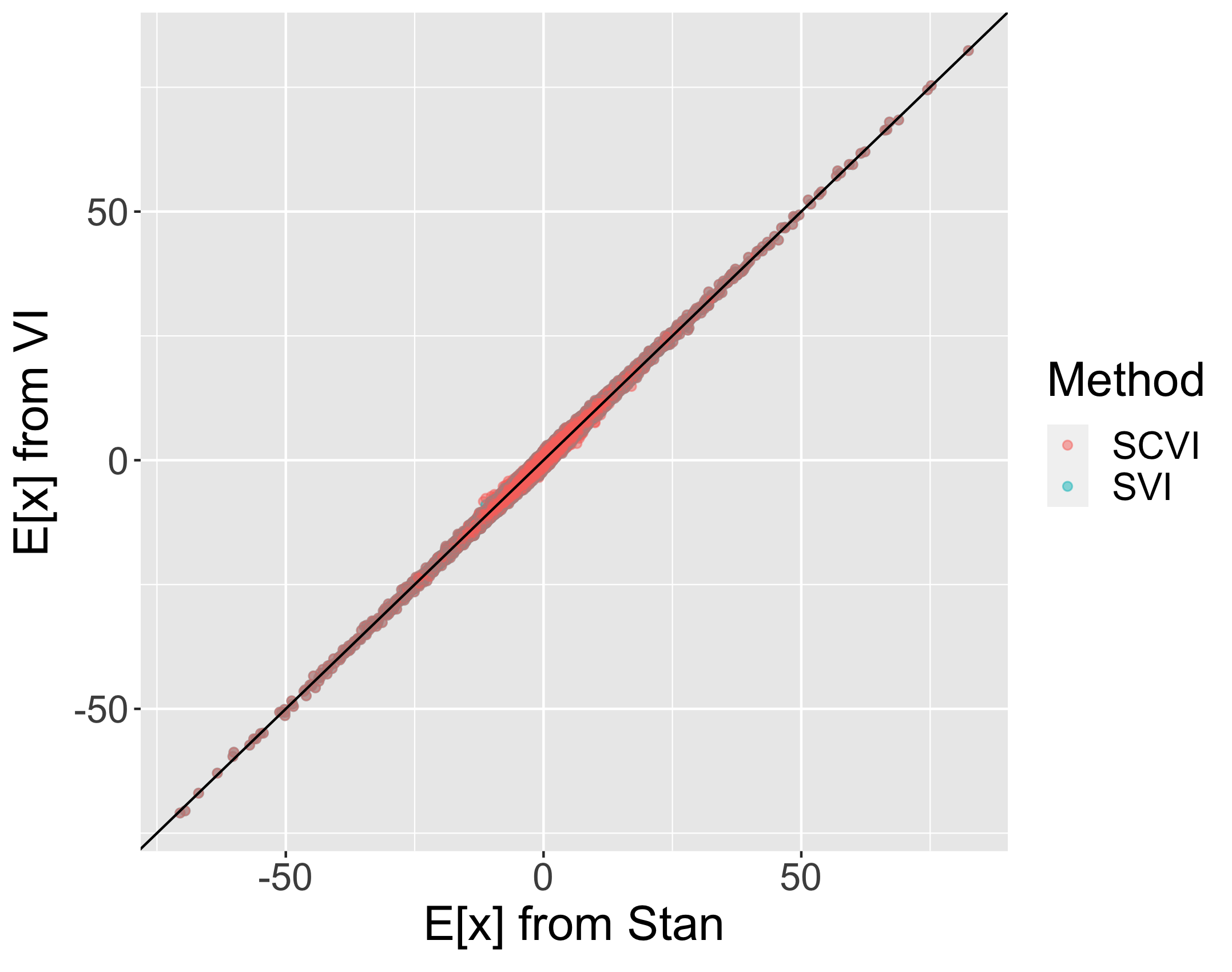} &  \includegraphics[width=0.49\linewidth]{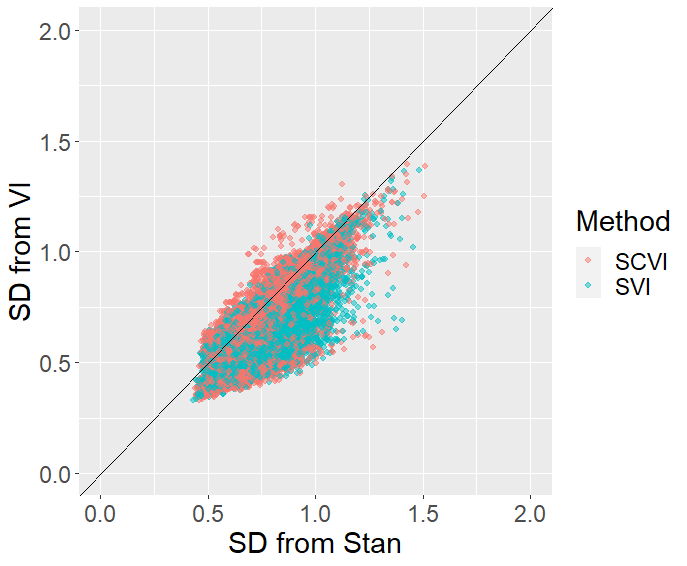}\\

  \includegraphics[width=0.49\linewidth]{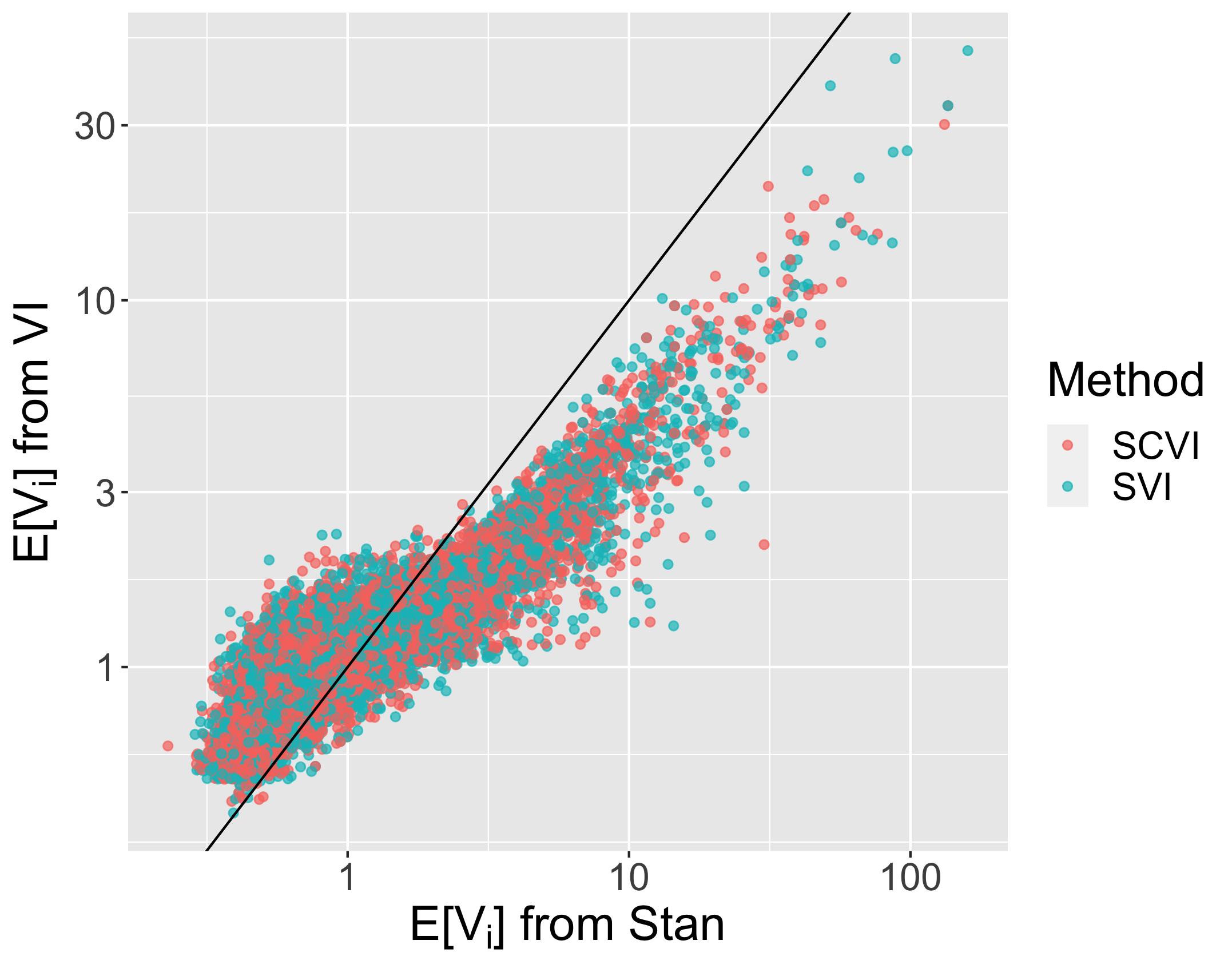} &   \includegraphics[width=0.49\linewidth]{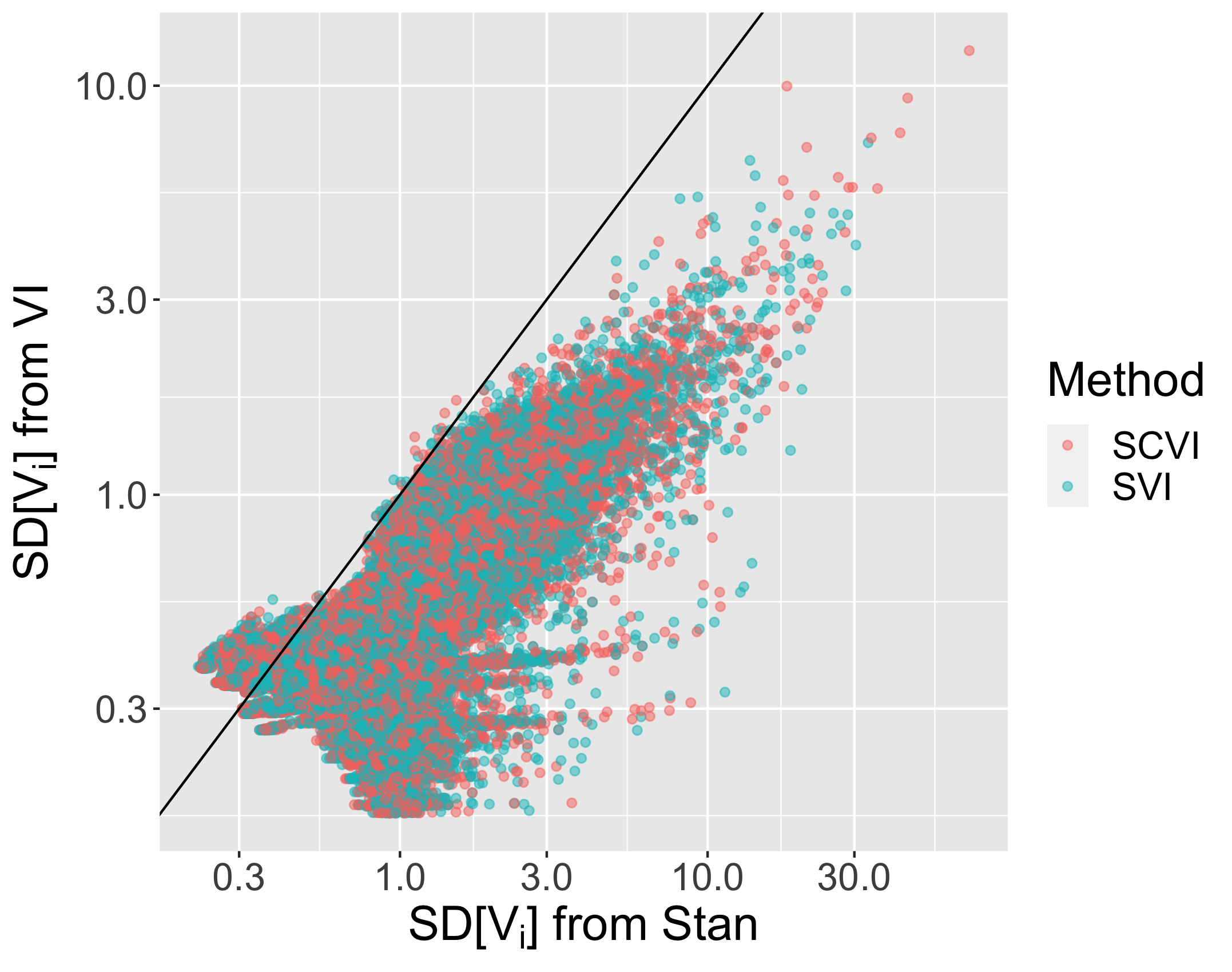} \\
%(a)  & (b)  \\
  \includegraphics[width=0.49\linewidth]{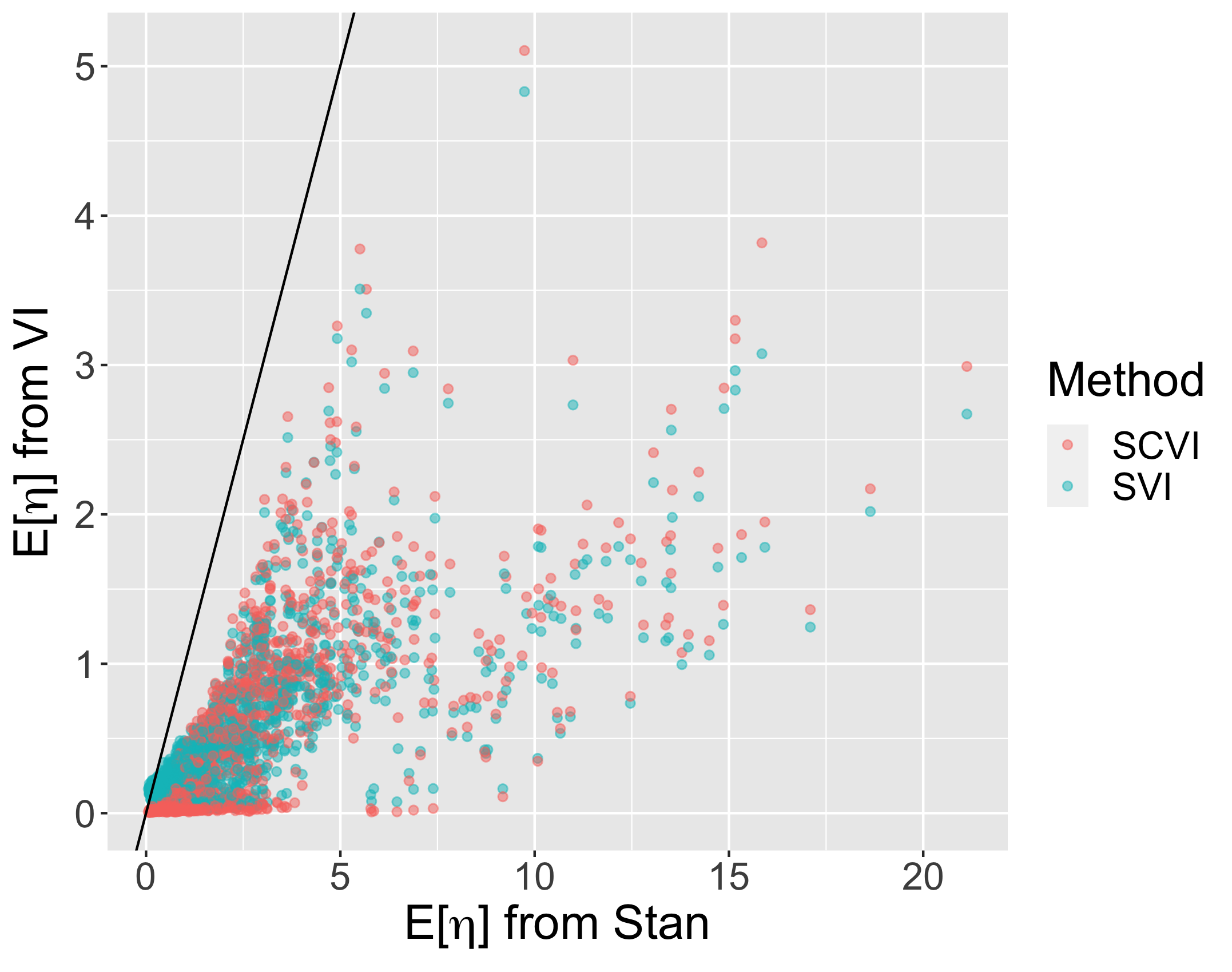} &    \includegraphics[width=0.49\linewidth]{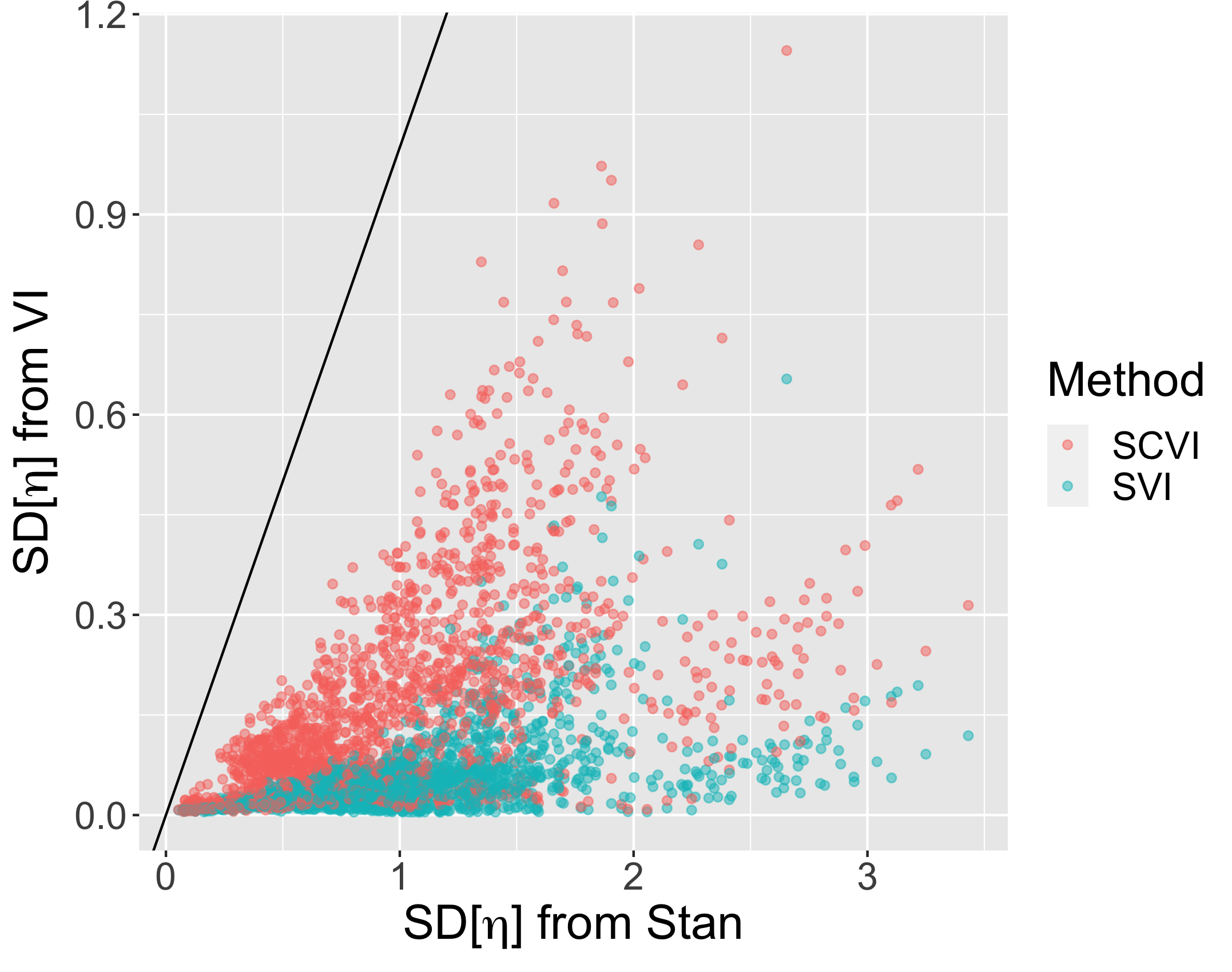} % \\
%(c) & (d) \\
\end{tabular}
  \caption{The scatter plots show the posterior means and standard deviations of the latent field $\mathbf{x}$, mixing variables $V_i$ (in log scale) and parameter $\eta$ obtained from the VI approximations against their estimates obtained from Stan. We aggregated all simulation scenarios in these plots.}
  \label{fig:algcomparizon}
\end{figure}

\begin{table}[htp]
\centering
\begin{tabular}[t]{cccc}
	\textbf{Algorithm} & \textbf{N = 100} & \textbf{N = 500} & \textbf{N = 1000} \\ \hline
	Stan      & 49             & 3350           & 15267           \\
	SVI       & 85             & 148            & 257               \\
	SCVI      & 29             & 73             & 145             
\end{tabular} 
  \caption{Average elapsed time in seconds for each algorithm and dimension $N$.}
  \label{fig:timecomparizon}
\end{table}

\section{Applications} \label{sect:applications}
 
 %There is a positive bias in the frequentist and Bayes estimators of $\eta$, especially for small datasets and when the true model is close to an LGM and \cite{cabral2022controlling} developed priors to control the flexibility of these models. 
 
 %\cite{cabral2022controlling} motivated an exponential prior on $\eta$ to control the flexibility of these models.
 
An LnGM is not always needed, and we need to ensure we are not overfitting the data  \citep{cabral2022controlling}. Therefore, in this section, we consider two applications. In the first one, there are outliers in the data, and there is an evident benefit in considering an LnGM. In the second application, the need for an LnGM is more ambiguous. The code for the applications can be found in the vignette of the \verb|ngvb| package.

 % However, we can always fit an LnGM, and if there is no substantial improvement in goodness-of-fit or predictive performance, this can confirm the reasonableness of the Gaussian assumption for the latent field.
 
 %\cite{cabral2022controlling} motivated an exponential prior on $\eta$ to control the flexibility of these models by shrinking the models towards an LGM.

\subsection{SAR model for areal data}
%As in \cite{wal}In this application, we will use the prior $\rho \sim \mathcal{Unif}(0,1)$, where we restrict the model so that it only accounts for positive spatial dependence.
%We did not fit a Gibbs sampler.
%Columbus plot should be per thousants
%https://rpubs.com/m3cinc/pollution more interesting dataset

We study here areal data, which consists of the number of residential burglaries and vehicle thefts per thousand households ($y_i$) in 49 counties of Columbus, Ohio, in 1980. This dataset, shown in Fig. \ref{fig:crime}, can be found in the \emph{spdep} package in $R$. We observe several sharp variations in the crime rate of neighboring counties, which, with a deeper look at the data, do not seem fully explained by the available covariates. These sharp variations suggest that we could benefit from using a non-Gaussian model to account for the spatial effects. \cite{walder2020bayesian} analyzed this dataset using a non-Gaussian model to model the spatial dependency in \cite{walder2020bayesian}. We consider the same set of covariates as the previous authors and fit the following model:
\begin{equation} \label{eq:columbus}
y_{i}= \beta_0 + \beta_1 \mathrm{HV}_i + \beta_2 \mathrm{HI}_i +  \sigma_{\mathbf{x}}x_i + \sigma_{\epsilon}\epsilon_i,    
\end{equation}
where $\mathrm{HV}_i$ and $\mathrm{HI}_i$ are the average household value and household income for county $i$, $\mathbf{x}$ accounts for structured spatial effects, while $\epsilon_i \overset{i.i.d}{\sim} N(0,1)$ is an unstructured spatial effect.

We consider a simultaneous autoregressive (SAR) model \citep{besag1974spatial,wall2004close,ver2018relationship} for the spatially structured effects $\mathbf{x}$. The Gaussian version of this model can be built from the following relationship $
\mathbf{x} = \mathbf{B}\mathbf{x} + \sigma_{\mathbf{x}}\mathbf{Z}$ where each element of the random vector $\mathbf{x}$ corresponds to a county and $\mathbf{Z}$ is a vector of i.i.d. standard Gaussian driving noise. The matrix $\mathbf{B}$ causes simultaneous autoregressions of each random variable on its neighbors, where two regions are considered neighbors if they share a common border. For simplicity, we assume $\mathbf{B}=\rho\mathbf{W}$, where $\mathbf{W}$ is a row standardized adjacency matrix and $-1<\rho<1$ so that the resulting precision matrix is valid. We thus end up with the system $\mathbf{D}_{SAR}\mathbf{x} = \sigma_{\mathbf{x}}\mathbf{Z}$, where $\mathbf{D}_{SAR}=\mathbf{I}-\rho\mathbf{W}$. The equivalent model driven by NIG noise is then $\mathbf{D}_{SAR}\mathbf{x} = \sigma_{\mathbf{x}}\mathbf{\Lambda}$, where $\mathbf{\Lambda}$ is i.i.d. standardized NIG noise. 

The model in  \eqref{eq:columbus} is thus an LnGM, and we fitted this model using the $\verb|ngvb|$ function with the SCVI algorithm. We also fitted the reciprocal Gaussian model for comparison in INLA. The Bayes factor between the LnGM and LGM was around 80000. We show the posterior means of the mixing variables in Fig. \ref{fig:crime} and the posterior summaries of the other parameters in Table \ref{table:crime}. We can observe in Fig. \ref{fig:crime} two pronounced outlier counties, and the consequence of downweighting these counties on the analysis was the reduction of the relationship between household value and crime rate.

\begin{figure}[h]
\begin{tabular}{cc}
  \includegraphics[width=0.49\linewidth, height = 5cm]{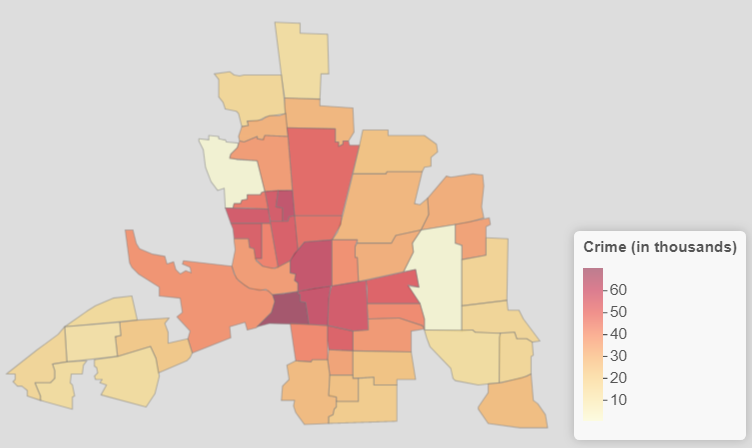} &   \includegraphics[width=0.49\linewidth, height = 5cm]{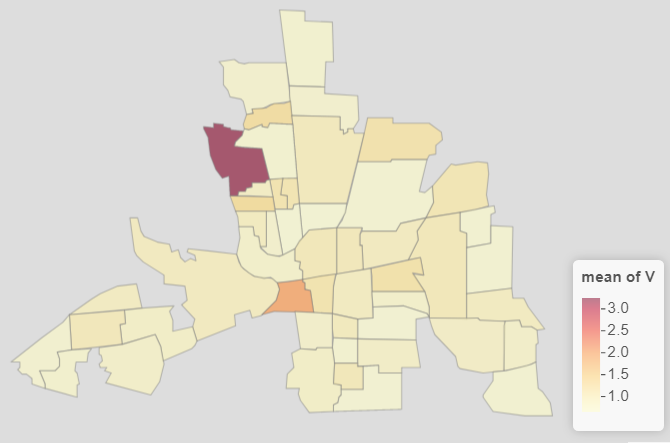} \\
\end{tabular}
  \caption{Crime rates in Columbus (left) per thousand households and posterior means of the mixing variables $V_i$ (right).}
  \label{fig:crime}
\end{figure}

 \begin{table}[h]
 \centering
\begin{tabular}{ccc}
\textbf{Parameter}                     & \textbf{LGM}          & \textbf{LnGM}              \\ \hline

Intercept                     & 60.065 (46.860, 72.010)                   & 58.873 (47.076, 69.499)                  \\
Household Value                   & -0.303 (-0.489, -0.117)                   & - 0.169 (-0.320, -0.018)                  \\
Household Income               & -0.948 (-1.691,-0.224)                    & -1.182 (-1.793, -1.177)          \\
$\sigma_{\boldsymbol{\epsilon}}$ & 0.008 (0.004,0.015 )                      & 0.010 (0.004, 0.023)                     \\
$\sigma_{\mathbf{x}}$           & 10.077 (8.165,12.439 )                    & 9.730 (7.917, 11.980)                    \\
$\rho$                        & 0.566 (0.242 ,0.839) & 0.648 (0.385, 0.860) \\
$\eta$ & - &    0.679 (0.100, 1.702)
\end{tabular}
\caption{Posterior mean and credible intervals (based on the 97.5\% quantiles) of the model's parameters in  \eqref{eq:columbus}.}
\label{table:crime}
\end{table}

%CPO Failure in INLA, consider bayes factor instead

\subsection{Model with random slope and intercept} \label{sect:random.slople.intercept}
%https://onlinelibrary.wiley.com/doi/full/10.1002/bimj.200900184 <-- better dataset maybe?
%replace first image by distance instead of value
%shorten things
%more citations
%Discuss with David what to do next, is it worth to show the plot? 
%Is it worth showing table, there is almost no change between LGM and LnGM...
 The data comes from an orthodontic study reported by \cite{potthoff1964generalized}. The response variable shown in Fig. \ref{fig:app1data} is the distance in millimeters between the pituitary and the pterygomaxillary fissure, measured for 11 girls and 16 boys at ages 8, 10, 12, and 14. %We will consider a mixed-effects model:
%$$
%\mathbf{y}_i=\mathbf{X}_i \boldsymbol{\beta}+\mathbf{Z}_i \mathbf{b}_i+\bold{\epsilon}_i, \quad i=1, \ldots, m,
%$$
%where outliers may occur either at the level of the within-subject error $\bold{\epsilon}_i$ called e-outliers, or at the level of the random effects $\mathbf{b}_i$, called b-outliers. To accommodate these outliers, several authors, such as \cite{pinheiro2001efficient}, proposed non-Gaussian models for the random effects $\mathbf{b}_i$ and within-subject errors $\bold{\epsilon}_i$.

\begin{figure}[htp]
  \centering
  \includegraphics[width=0.9\linewidth]{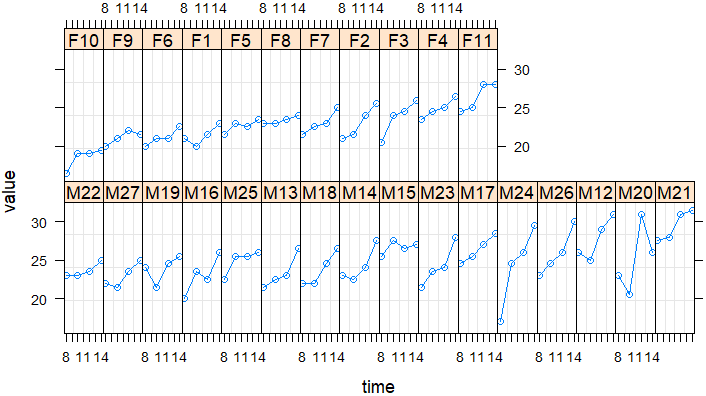}
  \caption{Trellis display showing the distance in millimeters between the pituitary and the pterygomaxillary fissure for girls (first row) and boys (second row).}
  \label{fig:app1data}
\end{figure}

 The data suggests that the intercept and slope vary with the subject. Thus, \cite{pinheiro2001efficient} proposed the following linear mixed-effects model to describe the response growth with age:
\begin{equation}\label{eq:app1}
y_{i j}=\beta_0+\delta_0 I_i(F) +\left(\beta_1+\delta_1 I_i(F)\right) t_j + b_{0 i}+b_{1 i} t_j+\epsilon_{i j},
\end{equation}
where $y_{i j}$ denotes the response for the $i$th subject at age $t_j$, $i=1, \ldots, 24$ and $j=1, \ldots, 4;$  $\beta_0$ and $\beta_1$ denote, respectively, the intercept and the slope fixed effects for boys; $\delta_0$ and $\delta_1$ denote, respectively, the difference in intercept and slope fixed effects between girls and boys; $I_i(F)$ denotes an indicator variable for females; $\mathbf{b}_i=\left(b_{0 i}, b_{1 i}\right)$ is the random effects vector for the $i$ th subject; and $\epsilon_{i j}$ is the within-subject error.

There are three possible sources of outliers: the intercept, the slope, and the within-subject error, and non-Gaussianity could be necessary for just one of these three components. However, \cite{pinheiro2001efficient} considered a $t$-Student model with the same degrees of freedom parameter for all components. Therefore, if there are only outliers in one component, their construction will add unneeded extra flexibility to the other two components. 

%Instead, \cite{asar2020linear} describes linear mixed-effects models, whit independent non-Gaussian components for each random effects, and within-subject error.
 
Instead, we consider separate non-Gaussian models for the random intercept and random slope, where each component has its non-Gaussianity parameters $\eta_0$ and $\eta_1$. \cite{asar2020linear} describes non-Gaussian models of this type to the general family of linear mixed-effects models. We set $b_{0 i} \overset{i.i.d}{\sim} \mathrm{NIG}(0,\sigma_0^2,\eta_0)$ independently from $b_{1 i} \overset{i.i.d}{\sim} \mathrm{NIG}(0,\sigma_1^2,\eta_1)$, where $\mathrm{NIG}(\mu,\sigma^2,\eta_2)$, stands for a NIG distribution with mean $\mu$, variance $\sigma^2$ and non-Gaussianity parameter $\eta$.  

We finally considered normally distributed errors $\epsilon_{ij} \overset{i.i.d}{\sim} N(0,\sigma_\epsilon^2)$. The Gaussian version of this model can be implemented in INLA, and we used the \verb|ngvb| function with the SCVI algorithm to extend this model to non-Gaussianity. Fig. \ref{fig:postV2} shows each component's posterior means and credible intervals of the mixing variables $V_i$. We also show the results obtained from the Gibbs sampler in \verb|ngvb|.

%Also, we consider the random slope and intercept components independent since the analysis done in \cite{pinheiro2001efficient} indicated little correlation between them. 

% The response is normally distributed where the mean is given by $\boldsymbol{\mu} = \mathbf{X}\boldsymbol{\beta} + \mathbf{b}_0 + \mathbf{Z}_1\mathbf{b}_1$, where $\mathbf{X}$ is the design matrix, the non-Gaussian random effects $\mathbf{b}_0=\sigma_0\Lambda_0(\eta_0)$ and $\mathbf{b}_1=\sigma_1\Lambda_1(\eta_1)$ are i.i.d. and each with dimension $N=27$ (dependency matrices $\mathbf{D}$ in  \eqref{eq:frame} are diagonal), and the matrix $Z_1$ contains the time indices.

%y \sim (\mu, \sigma_\epsilon^2)
%$$

%$$
%\mu = \beta_0 + \mathbf{X}[\delta_0,\beta_1,\delta_1] + 
%$$

The mixing variables $V_i$ for the random intercept component had the largest mean for subjects F10 and M21, and if we look back at Fig. \ref{fig:app1data}, the orthodontic distance in the first measurement was unusually small and large for these subjects, respectively. Conversely, for the random slope component, we found no significant departure from Gaussianity in any subject since the posterior means of the mixing variables were very close to 1, suggesting that a non-Gaussian model was not needed for this component. Posterior summaries of the parameters of the LGM and LnGM are found in Table \ref{table:randomslopeinter}. None of the parameters changed substantially, and the Bayes factor was around 7, slightly favoring the LnGM. 

%We also note that the scale of this component ($\sigma_0$) is relatively small compared to the scale of the other components ($\sigma_\epsilon$ and $\sigma_1$), so it accounts little for the variability of the data.

%y_{i j}=\beta_0+\delta_0 I_i(F) &+\left(\beta_1+\delta_1 I_i(F)\right) t_j \\
%&+b_{0 i}+b_{1 i} t_j+e_{i j}, i=1, \ldots, 27 \text { and } j=1, \ldots, 4, \nonumber

%We focused the analysis on the b-outliers by splitting the random effects into two components, one pertaining to the random intercepts and the other pertaining to the random slopes. However, the data also seems to contain $\epsilon$-outliers \citep{pinheiro2001efficient}. For instance, the responses of subject M20 vary more around the fitted line than the other subjects. These could have been studied by selecting a non-Gaussian distribution for the response. We could have also discriminated the random slopes and intercepts between the boys and girls and thus have 4 independent non-Gaussian random effect components but the amount of data for each component would be too small.

%that available in INLA (such as the $t$-student distribution)

\begin{figure}[H]
  \centering
  \includegraphics[width=\linewidth]{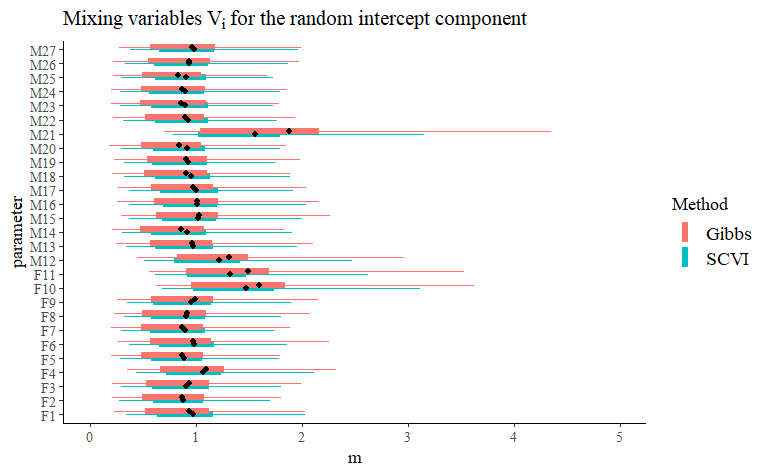} \\
  \includegraphics[width=\linewidth]{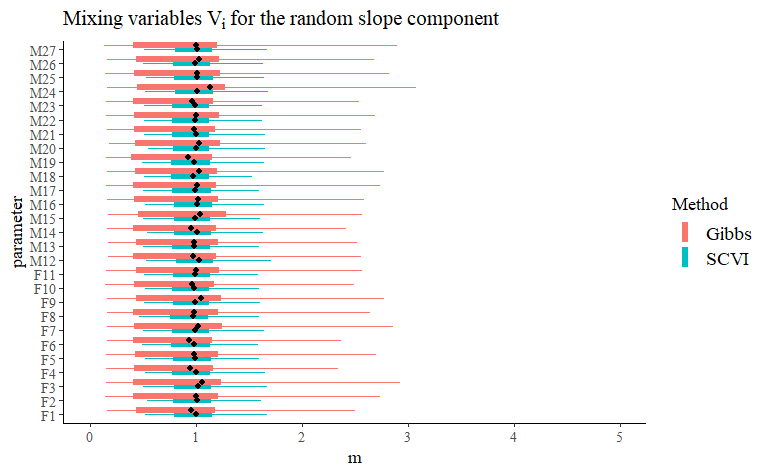}
  \caption{Boxplots of the mixing variables $V_{i}$ for the random intercepts (top) and random slopes (bottom) using a Gibbs sampler and the SCVI algorithm. Each bar contains the posterior means (black dot) and central intervals based on 50\% and 90\% quantiles.}
  \label{fig:postV2}
\end{figure}

\begin{table}[htp] 
\centering
\begin{tabular}{ccc}
\textbf{Parameters}          & \textbf{LGM} & \textbf{LnGM} \\ \hline
%$\log \pi(\mathbf{y})$                   & -249.830                         & -248.334                    \\
%$\sum_i \log \pi(y_i|\mathbf{y}_{-i})$                  & -206.2076                        & -205.178                    \\
$\sigma_\epsilon$  & 1.375 (1.185, 1.590)             & 1.373 (1.182, 1.602 )     	   \\
$\sigma_0$         & 1.816 (1.387, 2.421)             & 1.839 (1.336, 2.485)        \\
$\sigma_1$         & 0.014 (0.005, 0.042)             & 0.012 (0.004, 0.030)         \\
$\beta_0$          & 22.615 (21.578, 23.653)           & 22.535 (21.782, 23.288)       \\
$\delta_0$         & -0.796 (-2.679, 1.086)           & -0.599 (-2.380, 1.182)       \\
$\beta_1$          & 1.569 (1.264, 1.873)             & 1.568 (1.266, 1.870)        \\
$\delta_1$         & -0.610 (-1.087, -0.133)          & -0.610 (-1.083, -0.136)    \\
\end{tabular}
\caption{Posterior means and 97.5\% credible intervals for the parameters of the LGM and LnGM models of the orthodontic study.}
\label{table:randomslopeinter}
\end{table}

%Since these variables govern the flexibility of the model, when there is an outlier in the intercept or slope or in the within-subject error of subject $i$, $V_i$ will typically be large. However, we might only need more robustness in 

%b outlier M24 (larger slope) vs m outlier M20 (larger dispersion)
%actually M24 is a b-outlier not because of the higher slope, but because of the lower intercept 

%try another scale for the time

%\subsection{EM algorithm and Gibbs sampling}

%We propose two diagnostic algorithms to alleviate this problem. The Paretosmoothed importance sampling (PSIS) diagnostic gives a goodness of fit measurement for joint distributions, while simultaneously improving the error in the estimate.

\section{Discussion} \label{sect:discussion}

We derived in this paper two variational inference algorithms to approximate a generic class of latent non-Gaussian models. The approximation leads to an LGM that downweights outlier events of the latent process and provides a means to obtain robustness and diagnostics. Robustness is obtained because we safeguard against deviations from the latent Gaussian assumption. We also constructed diagnostic plots based on $q(\mathbf{V})$ (see Figs.  \ref{fig:progression}, \ref{fig:crime}, and \ref{fig:postV2}), which allow us to identify where those deviations occur.

Other variational Bayes algorithms could be explored to obtain more accurate approximations, for instance, those based on setting a fixed form distribution on  $q(\mathbf{V})$ \citep{tran2021practical}. Finally, for routine use by non-experts, we developed the \verb|ngvb| package, which facilitates Bayesian inference of LnGMs. LnGMs can be fitted by just adding a line of code to pre-existing implementations of LGMs with R-INLA, and future work includes making the package compatible with more models.

\bibliographystyle{Chicago}

\bibliography{main}

\begin{appendices}

The appendices contain the list of models that can be extended to non-Gaussianity according to section \ref{sect:ngproc}, the proofs of the theorems, several extensions of the main results, and more details about the Gibbs sampler in the \verb|ngvb| package. 
%Work plan:

\section{Non-Gaussian models}\label{app:ng}

We list in Table \ref{table:D} several models and their corresponding dependency matrices $\mathbf{D}$. More details about the construction of these matrices can be found in \cite{cabral2022controlling} and references therein. We include models with independent and identically distributed (i.i.d.) elements, random walks of order 1 and 2 (RW1 and RW2), autoregressive processes (AR), Mátern models defined through the stochastic partial differential equation (SPDE) approach, intrinsic conditional autoregressive (ICAR) and simultaneous autoregressive (SAR) models. We note that different parameterizations could have been used for some models. Finally, we list some improper models, i.e., whose precision matrices are not full rank, namely the RW1, RW2, and ICAR (intrinsic conditional autoregressive) models. These models are discussed in detail in \cite{rue2005gaussian}.

\begin{table}[htp]
\begin{tabular}{ccc}
\textbf{Model}  & \textbf{Matrix D} & \textbf{Notes} \\ \hline \\
i.i.d.          &     $\mathbf{I}_{N\times N}$              &                \\ \\

RW1             &    
\begin{math}
\begin{bmatrix}
-1 & 1 & 0 & 0 & \ \\
0 & -1 & 1 & 0 & \ \\
0 & 0 & -1 & 1 & \ \\
  &   &    &   & \ddots 
\end{bmatrix}_{(N-1)\times N}
\end{math}
           &                  \\ \\

RW2             &        \begin{math}
\begin{bmatrix}
1 & -2 & 1 & 0 & \ \\
0 & 1 & -2 & 1 & \ \\
0 & 0 & 1 & -2 & \ \\
  &   &    &   & \ddots \\
\end{bmatrix}_{(N-2)\times N} \end{math}              &            \\ \\

AR1             &       \begin{math}\begin{bmatrix}
\sqrt{1-\rho^2} & 0 & 0 & 0 &  \ \\
-\rho & 1 & 0 & 0 & \ \\
0 & -\rho & 1 & 0 & \ \\
  &   &    &   & \ddots \\
\end{bmatrix}_{N \times N}  \end{math}                    &   \makecell{$|\rho|<1$ is the autocorrelation \\ coefficient}  \\ \\

Higher order AR &     \makecell{Toeplitz matrix where the \\ lower diagonal contains the \\ autocorrelation parameters. }                &                \\ \\
\makecell{ Matérn models \\ via SPDE method} &    $\kappa^2 \mathbf{C} + \mathbf{G}$               &  \makecell{$\kappa$ is the spatial range parameter. \\ The matrices $\mathbf{C}$ and $\mathbf{G}$ are \\ in \cite{lindgren2011explicit} }.             \\ \\
ICAR             &        \begin{math}\begin{bmatrix}
1 & 1 & 0 & 0 \\
1 & 0 & 1 & 0 \\
0 & 1 & 0 & 1   \\
\end{bmatrix} \end{math}              &  \makecell{The following nodes are neighbours \\ $1 \leftrightarrow 2$, $1 \leftrightarrow 3$, and  $2 \leftrightarrow 4$  \\ Each row contains one \\ neighbouring relationship. \\ We place the value 1 at the \\ nodes' indices.}            \\ \\
SAR             &    $\mathbf{I} - \rho \mathbf{W}$                 &  \makecell{$\mathbf{W}$ is the row standardized \\ adjacency matrix, $|\rho|<1$ is the \\ autocorrelation coefficient. }
\end{tabular}
\caption{Several models and their corresponding dependency matrices $\mathbf{D}$.}
\label{table:D}
\end{table}

\newpage
\section{Proof of the theorems}

We present here the proofs of the theorems shown in the paper. To simplify the notation we consider $\mathbf{D} = \mathbf{D}(\boldsymbol{\theta}_2)$ and $\mathbf{V}^{\circ -1}$ stands for the element-wise inverse $[V_1^{-1}, \dotsc, V_N^{-1}]^\top$. The multivariate normal distribution is defined through its mean $\mathbf{m}$ and precision matrix $\mathbf{Q}$, and:
\begin{equation}\label{eq:useful2}
\pi_{\mathrm{Normal}}(\mathbf{x};\mathbf{m} = \mathbf{0}, \mathbf{Q} = \mathbf{D}^\top \mathrm{diag}(\mathbf{V}^{\circ -1}) \mathbf{D}) = \frac{\det(\mathbf{D})}{(2\pi)^{N/2}}\prod_{i=1}^N V_i^{-1/2} \exp\left(-\frac{1}{2V_i} [\mathbf{D}\mathbf{x}]^2_i \right).
\end{equation}
Additionally, we will make use of the following equality:
\begin{equation}\label{eq:useful1}
    \pi_{\mathrm{GIG}}(x; p_1, a_1, b_1)\pi_{\mathrm{GIG}}(x; p_2, a_2, b_2) = \pi_{\mathrm{GIG}}(x; p_1 + p_2 -1, a_1+a_2, b_1+b_2).
\end{equation}

\subsection{Proof of Theorem 1} \label{app:proof1}
\begin{proof}
From  \eqref{eq:CAVI} of the main paper, the solution to the variational problem is the system:

\[
\log q(\mathbf{x}, \boldsymbol{\theta}) \propto E_{q(\mathbf{V})q(\eta)}(\log \pi(\mathbf{x}, \boldsymbol{\theta}, \mathbf{V},\eta|\mathbf{y})),
\]

\[
\log q(\mathbf{V}) \propto E_{q(\mathbf{x}, \boldsymbol{\theta})q(\eta)}(\log \pi(\mathbf{x}, \boldsymbol{\theta}, \mathbf{V},\eta|\mathbf{y})),
\]

\[
\log q(\eta) \propto E_{q(\mathbf{x}, \boldsymbol{\theta}) q(\mathbf{V})}(\log \pi(\mathbf{x}, \boldsymbol{\theta}, \mathbf{V},\eta|\mathbf{y})).
\]

We find the surrogate distributions $q(\mathbf{x}, \boldsymbol{\theta}), q(\mathbf{V}), q(\eta)$ in the following three steps. For this, It is useful to decompose the posterior joint density into 6 parts:   
\[\log \pi(\mathbf{x}, \boldsymbol{\theta}, \mathbf{V},\eta|\mathbf{y}) = \log \pi(\mathbf{y} | \mathbf{x}, \boldsymbol{\theta}_1) + \log \pi(\mathbf{x} | \mathbf{V}, \boldsymbol{\theta}_2) + \log \pi(\mathbf{V}|\eta) + \log \pi(\boldsymbol{\theta}) + \log \pi(\eta) - \log \pi(\mathbf{y}).\]

\vspace{1cm}

\noindent
{ \bf Step 1: Surrogate density $q(\mathbf{x}, \theta)$.} For $q(\mathbf{x}, \boldsymbol{\theta})$ we have:
$$
\log q(\mathbf{x}, \boldsymbol{\theta}) \propto \log \pi(\mathbf{y} | \mathbf{x}, \boldsymbol{\theta}_1) + E_{q(\mathbf{V})}[\log \pi(\mathbf{x} | \mathbf{V}, \boldsymbol{\theta}_2)] + \log \pi(\boldsymbol{\theta}).
$$
The second term $E_{q(\mathbf{V})}[\log \pi(\mathbf{x} | \mathbf{V}, \boldsymbol{\theta}_2)] \propto -\frac{1}{2} \mathbf{x}^\top \mathbf{D}^\top \mathrm{diag}(\mathbf{\tilde{V}}) \mathbf{D} \mathbf{x} $ is proportional to the log-density of a Gaussian distribution for $\mathbf{x}$ with mean $\mathbf{0}$ and precision matrix $\tilde{\mathbf{Q}}=\mathbf{D}^\top\mathrm{diag}(\mathbf{\tilde{V}})\mathbf{D}$, where $\mathbf{\tilde{V}} = E_{q(\mathbf{V})}(\mathbf{V}^{\circ -1})$. Therefore the surrogate density is:
\begin{equation*} \label{eq:VI1}
    q(\mathbf{x}, \boldsymbol{\theta}_1) \propto \pi(\mathbf{y} | \mathbf{x}, \boldsymbol{\theta}) \pi_{\mathrm{Normal}}(\mathbf{x}; \mathbf{0}, \mathbf{\tilde{Q}})\pi(\boldsymbol{\theta}),
\end{equation*}
 which is the posterior distribution of ${\mathrm{LGM} \{ \pi(\mathbf{y}|\mathbf{x}, \boldsymbol{\theta}_1), \mathbf{0}, \tilde{\mathbf{Q}}, \pi(\boldsymbol{\theta}) \} }$.

\vspace{1cm}

\noindent
{ \bf Step 2: Surrogate density $q(\mathbf{V})$.} For $q(\mathbf{V})$ we have: $$\log q(\mathbf{V}) \propto  E_{q(\mathbf{x},\boldsymbol{\theta}_2)}[\log\pi(\mathbf{x} | \mathbf{V}, \boldsymbol{\theta}_2)] + E_{q(\eta)}[\log\pi(\mathbf{V}|\eta)].$$ From the multivariate normal expansion in  \eqref{eq:useful2} the first term in the sum simplifies to:
\begin{align*}
    E_{q(\mathbf{x},\boldsymbol{\theta}_2)}[\log\pi(\mathbf{x} | \mathbf{V}, \boldsymbol{\theta}_2)] &\propto \sum_{i=1}^N -\frac{1}{2} \log(V_i) -\frac{1}{2} V_i^{-1} E_{q(\mathbf{x},\boldsymbol{\theta}_2)}\left([\mathbf{D}\mathbf{x}]_i^2\right)  \\ &= \sum_{i=1}^N \log \pi_{\mathrm{GIG}}\left(V_i; 1/2, 0,  E_{q(\mathbf{x},\boldsymbol{\theta}_2)}\left([\mathbf{D}\mathbf{x}]_i^2\right) \right),
\end{align*}
The second term can also be expressed as a sum of GIG log densities:
\begin{align*}
    \log E_{q(\eta)}[\log\pi(\mathbf{V}|\eta)] &\propto \sum_{i=1}^N -\frac{3}{2}\log(V_i)  -\frac{V_i}{2}\left(E_{q(\eta)}[\eta^{-1}]\right) -\frac{1}{2V_i}\left(h_i^2E_{q(\eta)}[\eta^{-1}]\right) \\
    &= \sum_{i=1}^N\log \pi_{\mathrm{GIG}}\left(V_i; -1/2, E_{q(\eta)}[\eta^{-1}], h_i^2E_{q(\eta)}[\eta^{-1}]  \right).
\end{align*}
Therefore, using  \eqref{eq:useful1}, the sum of all GIG log-densities simplifies to: 
\begin{align*}
\log q(\mathbf{V}) &= \sum_{i=1}^N \log \pi_{\mathrm{GIG}}\left(V_i; -1, E_{q(\eta)}[\eta^{-1}], h_i^2E_{q(\eta)}[\eta^{-1}] + E_{q(\mathbf{x},\boldsymbol{\theta}_2)}\left([\mathbf{D}\mathbf{x}]_i^2\right) \right).
\end{align*}

\vspace{1cm}

\noindent
{ \bf Step 3: Surrogate density $q(\eta)$.} For $q(\eta)$ we have: $\log q(\eta) \propto  E_{q(\mathbf{V})}[\log \pi(\mathbf{V} | \eta)] + \log\pi(\eta)$. The first term of the sum is:
\begin{align*}
    E_{q(\mathbf{V})}[\log \pi(\mathbf{V} | \eta)] &\propto -\frac{N}{2}\log(\eta) - \frac{1}{2\eta} E_{q(\mathbf{V})}\left[ \sum_{i=1}^N \frac{(V_i-h_i)^2}{V_i}\right] \\
    &= \log \pi_{\mathrm{GIG}}\left(\eta; -N/2 + 1, 0, E_{q(\mathbf{V})}\left[ \sum_{i=1}^N\frac{(V_i-h_i)^2}{V_i}\right]  \right)
\end{align*}
We consider an exponential prior for the parameter $\eta$, and since the exponential distribution is a special case of the GIG distribution, the second term $\log\pi(\eta)$ can be expressed as
\begin{align*}
    \log\pi(\eta) = \log \pi_{\mathrm{GIG}}(\eta; 1, 2\alpha_\eta, 0).
\end{align*}
Leveraging again on  \eqref{eq:useful1} the sum of two GIG log-densities simplifies to: 
\begin{align*}
\log q(\eta) = \log \pi_{\mathrm{GIG}}\left(\eta; -N/2 + 1, 2\alpha_\eta, \sum_{i=1}^N E_{q(V_i)}[V_i] -2h_i + h_i^2 E_{q(V_i)}[V_i^{-1}] \right).
\end{align*}

\end{proof}

\subsection{Proof of Theorem 2 } \label{app:proof2}

\begin{proof}
\sloppy We now find the surrogate density $q(\mathbf{x}, \boldsymbol{\theta}, \mathbf{V}) = q(\mathbf{x}, \boldsymbol{\theta}) q(\mathbf{V})$ that minimises $\mathrm{KLD}(q(\mathbf{x}, \boldsymbol{\theta}, \mathbf{V})|\pi(\mathbf{x}, \boldsymbol{\theta}, \mathbf{V}|\mathbf{y}))$. The starting point is again the system:
\begin{align*}
\log q(\mathbf{x}, \boldsymbol{\theta}) &\propto E_{q(\mathbf{V})}[\log \pi(\mathbf{x}, \boldsymbol{\theta}, \mathbf{V}|\mathbf{y})], \\ \\
\log q(\mathbf{V}) &\propto E_{q(\mathbf{x}, \boldsymbol{\theta})}[\log \pi(\mathbf{x}, \boldsymbol{\theta}, \mathbf{V}|\mathbf{y})]. 
\end{align*}

The surrogate distribution $q(\mathbf{x}, \boldsymbol{\theta})$ will be the same as in Theorem \ref{theo:1}, which can be shown by repeating the same calculations done in step 1 of the proof of Theorem \ref{theo:1}. For $ q(\mathbf{V})$ we have:
$$
\log q(\mathbf{V})\propto  E_{q(\mathbf{x},\boldsymbol{\theta}_2)}[\log \pi(\mathbf{x} |\mathbf{V},\boldsymbol{\theta}_2)] +  \log \pi(\mathbf{V}),
$$
and so $q(\mathbf{V})= \exp( E_{q(\mathbf{x},\boldsymbol{\theta}_2)}[\log \pi(\mathbf{x} |\mathbf{V},\boldsymbol{\theta}_2)] ) \pi(\mathbf{V})$.
The first term is given by:
\begin{equation}\label{eq:surrogateVproof}
 \exp(E_{q(\mathbf{x}, \boldsymbol{\theta})}[\log \pi(\mathbf{x} |\mathbf{V},\boldsymbol{\theta}_2)]) \propto \prod_{i=1}^N V_i^{-1/2} \exp\left(-\frac{1}{2 V_i} E_{q(\mathbf{x}, \boldsymbol{\theta}_2)}([\mathbf{D}\mathbf{x}]_i^2) \right),    
\end{equation}
where we used  \eqref{eq:useful2} to expand $\log \pi(\mathbf{x} |\mathbf{V},\boldsymbol{\theta}_2)$. The second term is:
\begin{align}\label{eq:surrogateVproof2}
\pi(\mathbf{V}) &= \int_{0}^{\infty} \left(\prod_{i=1}^N \pi_{\mathrm{GIG}}(V_i; -1/2, \eta^{-1}, h_i^2\eta^{-1}) \right) \pi_{\mathrm{Exp}}(\eta; \alpha_\eta) d\eta, \\
                &\propto \int_{0}^{\infty} \left(\prod_{i=1}^N \eta^{-1/2} V_i^{-3/2}\exp\left( -\frac{1}{2\eta}  \frac{(V_i-h_i)^2}{V_i}\right)    \right)    \exp\left(-\alpha_\eta \eta \right) d\eta. \nonumber
\end{align}

By joining both terms, we get the following:
\begin{align}\label{eq:surrogateVproof3}
q(\mathbf{V}) &\propto \int_{0}^{\infty} \prod_{i=1}^N \left(\frac{ \exp\left(-\frac{1}{2 V_i} E_{q(\mathbf{x}, \boldsymbol{\theta}_2)}([\mathbf{D}\mathbf{x}]_i^2) -\frac{1}{2\eta} \frac{(V_i-h_i)^2}{V_i} \right)}{\eta^{1/2} V_i^{2}} \right) \exp\left(-\alpha_\eta \eta  \right) d\eta. 
\end{align}

Solving the previous integral gives us a high dimensional unnormalized pdf for $q(\mathbf{V})$, which is hard to analyze and sample. However, there is an alternative representation for $q(\mathbf{V})$ given by $\int_0^\infty (\prod_{i=1}^N q(V_i|\eta)) q(\eta) d\eta$, which means that we can get samples from $q(\mathbf{V})$ by first sampling from $q(\eta)$ and then sampling from $ q(V_i|\eta)$ for each $V_i$. We can see that  \eqref{eq:surrogateVproof3} is expressed by $q(\mathbf{V}) =  \int_0^\infty (\prod_i p(V_i,\eta)) p(\eta) d\eta$, where $p(\eta) = e^{-\alpha_\eta \eta}$, and
\begin{align*}
p(V_i,\eta) &=  \eta^{-1/2} V_i^{-2} \exp\left(-\frac{1}{2 V_i} d_i -\frac{1}{2\eta} \frac{(V_i-h_i)^2}{V_i} \right), \ \ \ \ \ \ d_i = E_{q(\mathbf{x}, \boldsymbol{\theta}_2)}([\mathbf{D}\mathbf{x}]_i^2)\\
            &= V_i^{-2} \exp\left(-\frac{V_i}{2}\eta^{-1} -\frac{1}{2 V_i} (d_i+ h_i^2\eta^{-1}) \right) \eta^{-1/2}  e^{h_i\eta^{-1}} \\
            &=   \pi_{\mathrm{GIG}}(V_i; -1, \eta^{-1}, d_i + h_i^2\eta^{-1}) \eta^{-1/2}e^{h_i\eta^{-1}} \frac{2 K_{-1}(\sqrt{\eta^{-1}(d_i+h_i^2\eta^{-1})})}{\sqrt{d_i\eta + h_i^2}}.
\end{align*}
Therefore we have:
\[
q(V_i|\eta) \sim \pi_{\mathrm{GIG}}(V_i; -1, \eta^{-1}, d_i + h_i^2\eta^{-1}),
\]
and the remaining terms in the integrand of  $q(\mathbf{V}) =  \int_0^\infty (\prod_i p(V_i,\eta)) p(\eta) d\eta$ are aggregated in:
\[
q(\eta) \propto \eta^{-N/2}e^{\eta^{-1}(\sum_{i=1}^N h_i) - \alpha_\eta \eta} \prod_{i=1}^N\left( \frac{ K_{-1}(\sqrt{\eta^{-1}(d_i+h_i^2\eta^{-1})}) }{\sqrt{d_i\eta + h_i^2}}\right).
\]

\end{proof}

\section{Extensions} \label{sect:extensions}

In some applications, the model is comprised of several latent components. For instance, section \ref{sect:random.slople.intercept} considers a model with a random intercept and a random slope component. Also, one might be interested in using other driving noise distributions, such as the heavy-tailed $t$-Student distribution. We discuss these extensions here.

%and not the NIG driving noise, which is semi-heavy-tailed (tails are log-linear)

\subsection{Several latent components}

When the latent field comprises several independent random effects, the modifications to Theorems \ref{theo:1} and \ref{theo:2} are straightforward. Consider $\mathbf{x} = [\mathbf{x}_1,\mathbf{x}_2]^\top$, where $\mathbf{x}_1$ models temporal dependence (with dependency matrix $\mathbf{D}_1$) while $\mathbf{x}_2$ models spatial dependence (with dependency matrix $\mathbf{D}_2$). Let the first component have dimension $N_1$, mixing variables $\mathbf{V}_1$, non-Gaussianity parameter $\eta_1$ (whose exponential prior has rate $\alpha_{\eta_1}$), and predefined constants $\mathbf{h}_1$, and likewise for the second model component. Also, let $x_{1,i}$ refer to the element $i$ of the first component $\mathbf{x}_1$. \sloppy For the SVI approximation, we consider the surrogate $q(\mathbf{x}_1,\mathbf{x}_2,\boldsymbol{\theta},\mathbf{V}_1,\mathbf{V}_2,\eta_1,\eta_2) = q(\mathbf{x}_1,\mathbf{x}_2,\boldsymbol{\theta})q(\mathbf{V}_1)q(\mathbf{V}_2)q(\eta_1)q(\eta_2)$. The surrogate $q(\mathbf{x}_1,\mathbf{x}_2,\boldsymbol{\theta})$ is still the posterior distribution of an LGM whose precision matrix is $\mathbf{Q} \propto \mathbf{D}^\top \mathrm{diag}(\mathbf{V}^{(-)}) \mathbf{D}$ where
\[
\mathbf{D} = \begin{bmatrix}
\mathbf{D}_1 & 0 \\
0 & \mathbf{D}_2 
\end{bmatrix},
\]
and $\mathbf{V}^{(-)} = [\mathbf{V}^{(1)}, \mathbf{V}^{(2)}]^\top$, where $V_i^{(1)} = E_{q(\mathbf{V})}[V_{1,i}^{-1}]$, and likewise for $\mathbf{V}^{(2)}$. The solution to the variational problem is:
\begin{align*}
q(\mathbf{x}, \boldsymbol{\theta}) &\sim  \mathrm{pLGM}  \{ \pi(\mathbf{y|\mathbf{x},\boldsymbol{\theta}_1}), \  \mathbf{m}=\mathbf{0}, \  \mathbf{Q}  = \mathbf{D}(\boldsymbol{\theta}_2)^\top \mathrm{diag} (\mathbf{V}^{\circ -1}) \mathbf{D}(\boldsymbol{\theta}_2), \  \pi(\boldsymbol{\theta}) \}, \\ 
 q(V_{1,i}) &\sim \mathrm{GIG}\left( -1, \ E_{q(\eta_1)}(\eta_1^{-1}), \ E_{q(\mathbf{x}_1, \theta)}([\mathbf{D}_1\mathbf{x}_1]_i^2]) + h_{1,i}^2E_{q(\eta_1)}(\eta_1^{-1}) \right), \ \ i=1,\dotsc,N_1, \\
  q(V_{2,i}) &\sim \mathrm{GIG}\left( -1, \ E_{q(\eta_2)}(\eta_2^{-1}), \ E_{q(\mathbf{x}_2, \theta)}([\mathbf{D}_2\mathbf{x}_2]_i^2]) + h_{2,i}^2E_{q(\eta_2)}(\eta_2^{-1}) \right), \ \ i=1,\dotsc,N_2, \\
 q(\eta_1) &\sim \mathrm{GIG}\left( -N_1/2 + 1, \ 2\alpha_{\eta_1}, \ \sum_{i=1}^N E_{q(V_{1,i})}(V_{1,i}) -2h_{1,i} + h_{1,i}^2 E_{q(V_{1,i})}(V_{1,i}^{-1})  \right), \\
 q(\eta_2) &\sim \mathrm{GIG}\left( -N_2/2 + 1, \ 2\alpha_{\eta_2}, \ \sum_{i=1}^N E_{q(V_{2,i})}(V_{2,i}) -2h_{2,i} + h_{2,i}^2 E_{q(V_{2,i})}(V_{2,i}^{-1})  \right).
\end{align*}

So we essentially need to repeat steps 6 to 11 of the CAVI Algorithm \ref{alg:MFVI1} for each set of mixing variables $\mathbf{V}_i$. For the SCVI algorithm, we consider the surrogate ${q(\mathbf{x}_1,\mathbf{x}_2,\boldsymbol{\theta},\mathbf{V}_1,\mathbf{V}_2) = q(\mathbf{x}_1,\mathbf{x}_2,\boldsymbol{\theta})q(\mathbf{V}_1)q(\mathbf{V}_2)}$ and, similarly, steps 5 and 6 in the CAVI \mbox{Algorithm} \ref{alg:MFVI2} need to be repeated for each set of mixing variables $\mathbf{V}_i$.

%Then to update $q(\mathbf{V}_1)$ and $q(\eta_1)$ we perform 7-11 of Algorithm \ref{alg:MFVI1} taking only into account $\mathbf{x}_1$

%E_{q(\mathbf{x}_1,\boldsymbol{\theta})}([\mathbf{D}_1\mathbf{x}_1]^2_i)$, $\mathbf{V}_i^{(2)} = E_{q(\mathbf{x}_2,\boldsymbol{\theta})}([\mathbf{D}_2\mathbf{x}_2]^2_i)$ and 

%Then we have to repeat the final steps of Algorithms \ref{alg:MFVI1} and \ref{alg:MFVI2} for each set of mixing variables $\mathbf{V}^{(1)}$ and $\mathbf{V}^{(2)}$.

\subsection{Alternative driving noise distributions}

In sections \ref{sect:preliminaries} and \ref{sect:VIresult}, we restricted the driving noise distribution to the NIG distribution. We could have considered, however, other members of the generalized hyperbolic family, which we list in Table \ref{table:GH}. For example, the parameter $\eta$ of the GAL (generalized asymmetric Laplace) distribution has a similar interpretation as the NIG distribution (Gaussian model corresponds to the limiting case $\eta \to 0$). Still, in the $t$-distribution, it is the degrees of freedom parameter. As explained in \cite{wallin2015geostatistical}, when we define non-Gaussian processes in continuous space, the mixing distributions should be closed under convolution. However, the $t$-Student distribution is not closed under convolution; therefore, we should only consider this distribution for the driving noise for models defined in discrete space, where $h_i=1$.
%For the GAL distribution the parameter $\eta$ has a similar interpretation, and the Gaussian

\begin{table}
\begin{tabular}{ccc}
Distribution of $\Lambda_i$ & Mixing distribution of $V_i$ & GIG form of the mixing distribution \\
\hline$t$-Student & $\operatorname{IGamma}(\eta / 2, \eta / 2)$ & $\mathrm{GIG}(\eta / 2, 0, \eta)$ \\
$\mathrm{NIG}$ & $\mathrm{IGaussian}(h_i, h_i^2\eta^{-1})$ & $\mathrm{GIG}\left(-\frac{1}{2}, \eta^{-1}, h_i^2\eta^{-1} \right)$ \\
GAL & $\mathrm{Gamma}(h_i \eta^{-1}, \eta^{-1} )$ & $\mathrm{GIG}(h_i \eta^{-1}, 2 / \eta^{-1}, 0)$ \\
\end{tabular}
\caption{Special cases of the GH distribution, their mixing distribution, and GIG form.}
\label{table:GH}
\end{table}

The surrogate distribution $q(\mathbf{x},\boldsymbol{\theta})$ is still the same as in Theorem \ref{theo:1}. For the mixing distributions shown in Table \ref{table:GH}, all mixing variables $V_i$ have a GIG prior with some parameters $p_i,a_i,b_i$ and the surrogate $q(V_i)$ takes the form% and the surrogate $q(V_i)$
\[
q(V_i) \sim \pi_{\mathrm{GIG}}\left( p_i-1/2, \ E_{q(\eta)}(a_i), \ E_{q(\mathbf{x}, \boldsymbol{\theta})}([\mathbf{D}\mathbf{x}]_i^2]) + E_{q(\eta)}(b_i) \right).
\]
% which have different mixing distributions
For GAL driving noise, we consider an exponential prior for $\eta$ with some rate parameter $\alpha_\eta$. The surrogate distribution $q(\eta)$ is \\
\[
q(\eta) \propto \pi_{\mathrm{GIG}}(\eta; \  -(\textstyle \sum_i h_i)/\eta, \  2(\alpha_\eta + \textstyle \sum_i h_i E(\log(V_i))), \ 2 \textstyle \sum_i E(V_i) \ ) \prod_{i=1}^N \Gamma(h_i/\eta)^{-1},
\]
where the expectations are with respect to $q(\mathbf{V})$, and $\Gamma$ is the Gamma function. For $t$-Student driving noise, the surrogate distribution $q(\eta)$ is \\
\[
q(\eta) \propto \pi(\eta)\pi_{\mathrm{GIG}}(\eta; \  N\eta/2+1, \  N\log(2) + \textstyle \sum_i E(V_i^{-1}) + E(\log(V_i)), \ 0) \prod_{i=1}^N \Gamma(\eta/2)^{-1},
\] 
where $\pi(\eta)$ is the prior for $\eta$. We can sample from these distributions by computing their quantile functions numerically. It is also possible to consider skewed and long-tailed members of the GH distribution, but we did not consider these distributions in this paper.

\newpage
\section{Gibbs sampler}\label{sect:Gibbs}

The built-in Gibbs sampler implemented in the $\verb|ngvb|$ package iterates between the full conditionals:

\begin{align*}
\mathbf{x}, \boldsymbol{\theta}|\mathbf{V},\eta &\sim  \mathrm{pLGM}  \{ \pi(\mathbf{y}|\mathbf{x},\boldsymbol{\theta}_1), \  \mathbf{m}=\mathbf{0}, \  \mathbf{Q}  = \mathbf{D}(\boldsymbol{\theta}_2)^\top \mathrm{diag} (\mathbf{V}) \mathbf{D}(\boldsymbol{\theta}_2), \  \pi(\boldsymbol{\theta}) \}, \\ 
V_i|\mathbf{x}, \boldsymbol{\theta},\eta  &\sim \mathrm{GIG}\left( -1, \ \eta^{-1}, \ [\mathbf{D}(\boldsymbol{\theta}_2)\mathbf{x}]_i^2 + h_i^2\eta^{-1} \right), \ \ i=1,\dotsc,N, \\
\eta | \mathbf{x}, \boldsymbol{\theta}, \mathbf{V} &\sim \mathrm{GIG}\left( -N/2 + 1, \ 2\alpha_\eta, \ \sum_{i=1}^N V_i -2h_i + h_i^2 V_i^{-1}  \right). 
\end{align*}

We highlight the resemblance between these full conditionals to the solution of the variational problem in Theorem \ref{theo:1}. The main difference between the Gibbs sampler and the CAVI Algorithm \ref{alg:MFVI1} is that when we iterate between the full conditionals, we do not take the expectation of $\tilde{d_i} = [\mathbf{D}(\boldsymbol{\theta}_2)\mathbf{x}]_i^2, i= 1, \dotsc, N,$ nor the mixing variables $\mathbf{V}$ and parameter $\eta$. Instead, we generate samples from $\tilde{d_i}$, $\mathbf{V}$, and $\eta$.

\end{appendices}

\end{document}